\documentclass[a4paper,USenglish,cleveref,autoref,thm-restate,numberwithinsect]{lipics-v2021}
\usepackage[bold,full]{complexity}
\usepackage{makecell}
\usepackage{tablefootnote}
\usepackage{xspace}

\usepackage[normalem]{ulem}

\newcommand{\MAdec}[1]{\textbf{MA$_{dec}^{#1}$}\xspace}
\newcommand{\MAbound}[1]{\textbf{MA$_{bound}^{#1}$}\xspace}
\newcommand{\MAsize}[1]{\textbf{MA$_{size}^{#1}$}\xspace}
\newcommand{\MAmin}[1]{\textbf{MA$_{min}^{#1}$}\xspace}
\newcommand{\MAdecL}{\textbf{MA$_{dec}^\logic$}\xspace}
\newcommand{\MAboundL}{\textbf{MA$_{bound}^\logic$}\xspace}
\newcommand{\MAsizeL}{\textbf{MA$_{size}^\logic$}\xspace}
\newcommand{\MAminL}{\textbf{MA$_{min}^\logic$}\xspace}

\newcommand{\MAdecQ}{\textbf{MA$_{dec}^Q$}\xspace}
\newcommand{\MAboundQ}{\textbf{MA$_{bound}^Q$}\xspace}
\newcommand{\MAsizeQ}{\textbf{MA$_{size}^Q$}\xspace}
\newcommand{\MAminQ}{\textbf{MA$_{min}^Q$}\xspace}


\newcommand{\EvalQ}{\textbf{Eval$^Q$}\xspace}
\newcommand{\EvalL}{\textbf{Eval$^\logic$}\xspace}
\newcommand{\Eval}[1]{\textbf{Eval$^{#1}$}\xspace}
\newcommand{\dbSATL}{\textbf{dbSAT$^{\hspace*{1pt}\logic}$}\xspace}
\newcommand{\dbSAT}[1]{\textbf{dbSAT$^{#1}$}\xspace}
\newcommand{\minSC}{\textbf{MIN-SC}\xspace}

\newcommand{\logic}{\mathcal{L}\xspace}
\newcommand{\Var}{\mathbf{Var}}
\newcommand{\Dom}{\mathbf{Dom}}
\newcommand{\var}{\mathbf{var}}
\newcommand{\rel}{\mathbf{rel}}
\newcommand{\free}{\mathbf{free}}
\newcommand{\bound}{\mathbf{bound}}
\newcommand{\Body}{\mathbf{Body}}
\newcommand{\arity}{\mathbf{arity}}
\newcommand{\df}{~\mathrel{{:}{-}}~}

\newcommand{\adom}{\textbf{adom}}
\newcommand{\bfE}{\mathbf{E}}
\newcommand{\bfI}{\mathbf{I}}
\newcommand{\bfJ}{\mathbf{J}}
\newcommand{\bfR}{\Sigma}
\newcommand{\bfS}{\mathbf{S}}
\newcommand{\repI}{\mathbf{I}_\Rep}

\newcommand{\Facts}{\mathbf{Facts}}
\newcommand{\ans}{\mathbf{ans}}
\newcommand{\Unfoldings}{\mathbf{Unfoldings}}


\newcommand{\Img}{\textrm{Im}}
\newcommand{\rng}{\textrm{rng}}
\newcommand{\card}[1]{\lvert #1 \rvert}
\newcommand{\tup}[1]{\overline{#1}}
\newcommand{\len}{\textrm{len}}



\newcommand{\Pred}{\leq_\P}
\newcommand{\cost}{\textrm{cost}}
\newcommand{\opt}{\textrm{opt}}
\newclass{\FEXP}{FEXP}
\newclass{\twoEXP}{2EXP}
\newclass{\FtwoEXP}{F2EXP}
\newclass{\twoNEXP}{2NEXP}

\newcommand{\Rep}{\mathsf{U}}

\newcommand{\Ins}{\mathsf{Ins}}
\newcommand{\Del}{\mathsf{Del}}

\newcommand{\boundL}{\beta_{\logic}}
\newcommand{\cbound}{c_Q}


\newcommand{\RA}{\textrm{\upshape\bf RA}\xspace}
\newcommand{\CQ}{\textrm{\upshape\bf CQ}\xspace}
\newcommand{\CQneg}{\textrm{\upshape\bf CQ$_\lnot$}\xspace}
\newcommand{\CQsjf}{\textrm{\upshape\bf CQ$_\lnot$[sjf]}\xspace}

\newcommand{\UCQ}{\textrm{\upshape\bf UCQ}\xspace}
\newcommand{\UCQneg}{\textrm{\upshape\bf UCQ$_\lnot$}\xspace}
\newcommand{\UCQsjf}{\textrm{\upshape\bf UCQ$_\lnot$[sjf]}\xspace}
\newcommand{\DL}{\textrm{\upshape\bf Datalog}\xspace}

\newcommand{\spDL}{\textrm{\upshape\bf sp-Datalog}\xspace}
\newcommand{\DLneg}{\textrm{\upshape\bf Datalog$_\lnot$}\xspace}
\newcommand{\negCQ}{\textrm{\upshape\bf CCQ}\xspace}

\newcommand{\PJ}{\textrm{\upshape\bf PJ}\xspace}

\newcommand{\SJUneg}{\textrm{\upshape\bf SJU$_\lnot$}\xspace}
\newcommand{\SJneg}{\textrm{\upshape\bf SJ$_\lnot$}\xspace}
\newcommand{\SPUneg}{\textrm{\upshape\bf SPU$_\lnot$}\xspace}
\newcommand{\SPneg}{\textrm{\upshape\bf SP$_\lnot$}\xspace}
\newcommand{\SPJUneg}{\textrm{\upshape\bf SPJU$_\lnot$}\xspace}

\begin{CCSXML}
<ccs2012>
   <concept>
       <concept_id>10003752.10010070.10010111.10011710</concept_id>
       <concept_desc>Theory of computation~Data structures and algorithms for data management</concept_desc>
       <concept_significance>500</concept_significance>
       </concept>
   <concept>
       <concept_id>10003752.10010070.10010111.10011734</concept_id>
       <concept_desc>Theory of computation~Logic and databases</concept_desc>
       <concept_significance>500</concept_significance>
       </concept>
   <concept>
       <concept_id>10003752.10003777.10003787</concept_id>
       <concept_desc>Theory of computation~Complexity theory and logic</concept_desc>
       <concept_significance>500</concept_significance>
       </concept>
 </ccs2012>
\end{CCSXML}

    \pdfoutput=1 
    \hideLIPIcs  

\title{The Complexity of Finding Missing Answer Repairs}
\author{Jesse Comer}{University of Pennsylvania, United States \and \url{https://jessecomer.github.io/}}{jacomer@seas.upenn.edu}{https://orcid.org/0009-0006-9734-3457}{}
\author{Val Tannen}{University of Pennsylvania, United States \and \url{https://www.cis.upenn.edu/~val/home.html}}{val@seas.upenn.edu}{0009-0008-6847-7274}{}
\authorrunning{J. Comer and V. Tannen}
\Copyright{Jesse Comer and Val Tannen}

\ccsdesc[500]{Theory of computation~Complexity theory and logic}
\ccsdesc[500]{Theory of computation~Logic and databases}

\keywords{Missing answers, database repairs, datalog, computational complexity}


\nolinenumbers 

\EventEditors{Balder ten Cate and Maurice Funk}
\EventNoEds{2}
\EventLongTitle{29th International Conference on Database Theory (ICDT 2026)}
\EventShortTitle{ICDT 2026}
\EventAcronym{ICDT}
\EventYear{2026}
\EventDate{March 24--27, 2026}
\EventLocation{Tampere, Finland}
\EventLogo{}
\SeriesVolume{365}
\ArticleNo{12}

\begin{document}

\maketitle

\begin{abstract}
We investigate the problem of identifying database repairs for missing tuples in query answers. We show that when the query is part of the input - the \emph{combined complexity} setting - determining whether or not a repair exists is polynomial-time is equivalent to the satisfiability problem for classes of queries admitting a weak form of projection and selection. We then identify the sub-classes of unions of conjunctive queries with negated atoms, defined by the relational algebra operations permitted to appear in the query, for which the minimal repair problem can be solved in polynomial time. In contrast, we show that the problem is $\NP$-hard, as well as set cover-hard to approximate via strict reductions, whenever both projection and join are permitted in the input query. Additionally, we show that finding the size of a minimal repair for unions of conjunctive queries (with negated atoms permitted) is $\OptP[\log(n)]$-complete, while computing a minimal repair is possible with $O(n^2)$ queries to an $\NP$ oracle. With recursion permitted, the combined complexity of all of these variants increases significantly, with an $\EXP$ lower bound. However, from the \emph{data complexity perspective}, we show that minimal repairs can be identified in polynomial time for all queries expressible as semi-positive datalog programs.
\end{abstract}

\section{Introduction}
\label{sec:intro}
We study the computational complexity of the following \emph{missing answer repair problem}: given a query $Q$, a database instance $\bfI$ (the \emph{input} or \emph{source} database) and a tuple $\tup{a}$ which may not be in the answer $Q(\bfI)$, the problem asks to find an update $\Rep$ to $\bfI$ (a collection of insertions and deletions of tuples in $\bfI$, collectively called a \emph{repair}), producing a \emph{repaired} instance $\repI$ such that $\tup{a}\in Q(\repI)$. The notion of ``finding'' a repair can be made precise in four ways: deciding if a repair exists ($\MAdec{}$), deciding if a repair of size at most $k$ exists ($\MAbound{}$), determining the size of a minimal repair ($\MAsize{}$), and computing a repair of minimum size ($\MAmin{}$). For each version of the problem, we consider both \emph{combined complexity}, where the problem is parametrized by a class of queries, and \emph{data complexity}, where the problem is parametrized by a single fixed query. In both cases, we consider the schema of the database to be fixed. The missing answer problem is related to two concepts already studied in some depth in the literature: \emph{view updates} and \emph{database repairs}. We outline both and explain how our results differ from previous ones. We also discuss how negation relates the missing answer repair problem to other similar problems in the literature.

\subparagraph*{View updates.}
The missing answer repair problem has applications to the \emph{view update problem}~\cite{DayalB82}: if $Q$ defines a view, then the missing answer repair problem is to find an update to the input database instance so that a given tuple is \emph{inserted} in the view defined by $Q$ on the updated instance. To the best knowledge of the present authors, most \emph{complexity} studies of updating views have focused so far on the problem of finding instance updates in order to \emph{delete} a tuple from a view. This previous work was restricted to monotonic queries, and so updates were limited to deletions of tuples from the input database (see the work on \emph{deletion propagation}~\cite{BunemanKT02, CongFGLL12, KimelfeldVW12, Kimelfeld12, Miao2016complexity} and on \emph{resilience}~\cite{FreireGIM15, FreireGIM20, MakhijaG23}). In contrast, we focus on view insertions, rather than view deletions, and we consider more general (non-monotonic) queries. Consequently, our updates are comprised of both insertions and deletions. Furthermore, our results are much more optimistic than those for deleting tuples from queries. In contrast to resilience, whose data complexity is $\NP$-hard even for some conjunctive queries, we show here that the missing answer repair problem is solvable in polynomial time, in data complexity, for all semi-positive datalog programs. 

Work has also been done on \emph{insertion propagation}, which seeks to add tuples to views, and \emph{annotation propagation}, which seeks to modify an annotation of a tuple in the source database in order to modify the annotation of a target tuple in the view \cite{BunemanKT02, Tan2004containment, Cong2006annotation, CongFGLL12}. Previous work on these problems differs in important ways from the missing answer problem studied here. To understand these differences, note that, in general, updates to a database instance with the goal of updating a view are not unique. As a result, the work on view updates has always emphasized finding instance updates which satisfy some desired criteria~\cite{DayalB82}. As mentioned previously, we focus on making insertions of tuples to query results while minimizing the size of the update applied to the database. In our later discussion of related work, we outline alternative criteria for updates that have been considered.

\subparagraph*{Database repairs.}
The notion of a database \emph{repair} originates in the literature on consistent query answering (CQA)~\cite{ArenasBC99}. In that context, given a set $\IC$ of integrity constraints and an inconsistent database instance $\bfI$ (one not satisfying $\IC$), a repair is a minimally-modified consistent database instance $\bfI'$. The goal of CQA is to identify whether or not a given tuple is a \emph{certain answer}, i.e., in the result of a given query $Q$ for \emph{all} repairs $\bfI'$. There are a number of ways to define what it means for $\bfI'$ to be ``minimally-modified'' from $\bfI$. Most similar to the notion used here are the \emph{cardinality repair semantics}~\cite{arenas2003answer,lopatenko06complexity}. Given an inconsistent database instance $\bfI$ and consistent database instances $\bfI'$ and $\bfI''$, write $\bfI' \leq^\bfI_C \bfI''$ if $\card{\bfI \oplus \bfI'} \leq \card{\bfI \oplus \bfI''}$ (where $S \oplus T$ denotes the symmetric difference of sets $S$ and $T$). In CQA with cardinality repairs, the goal is to find tuples which appear in $Q(\bfI')$ for \emph{all} $\leq^\bfI_C$-minimal consistent databases $\bfI'$. In other words, minimality for cardinality repairs is defined by the size of the symmetric difference between the input database and the modified database.

CQA under cardinality repair semantics is different from the missing answer repair problem in a number of important ways. A \emph{repair} in CQA is a $\leq^\bfI_C$-minimal database which is consistent with the integrity constraints $\IC$ in the input. In contrast, a repair for us is any update which places the input tuple in the answer to the input query on the updated database. This means that, given an input database and input query, the minimal repair size depends on the input tuple as well. Intuitively, rather than fixing the repair size and looking for tuples appearing in the answer to the query on all repairs, we fix the tuple and look for the smallest repair. Due to these differences, it is not immediately clear how to relate the computational complexity of these two problems.

Another context in which database repairs appear is \emph{data cleaning}~\cite{Fan2012foundations}. One of the aspects of cleaning data collected, for example, from the Web, requires restoring expected integrity constraints, and hence repairing inconsistent database instances (e.g.~\cite{Kolahi2009approximating, XuZhaAlaTan18, Bertossi2018measuring, Carmeli2024database, GraedelTan24}). This is similar to CQA, except that the priority is only to restore consistency. If we consider a query language capable of expressing certain integrity constraints (typically as the truth of a Boolean query), then the missing answer repair problem asks for a minimal cardinality repair in order to update an inconsistent database to a consistent one. We do not view the work in the present paper as particularly relevant to this line of research, since expressing commonly studied forms of integrity constraints, such as functional and inclusion dependencies, or tuple-generating dependencies, is not possible without using some form of universal quantification in the query. In contrast, we focus here on fragments of semi-positive datalog, which lacks universal quantification\footnote{However, it should be noted that some integrity constraints, such as tuple-generating dependencies, can be expressed in semi-positive datalog when the universally-quantified variables are instantiated with constants. In this sense, our results show that tuple-generating dependences can be repaired ``tuple-by-tuple'' in polynomial-time (for each tuple) data complexity.}.

\subparagraph*{Missing answers and negation.}
As deletion propagation and resilience are concerned with modifying a database to add tuples to the answer of a query, they might be aptly interpreted as \emph{wrong answer} repair problems. Missing and wrong answer repairs are related as follows. Given a query $Q$, let $\lnot Q$ (the \emph{negation} of $Q$; see Remark \ref{remark:negation}) denote a query which, on each database instance $\bfI$, returns the complement of $Q(\bfI)$ (with respect to all tuples of appropriate arity over the active domain of $\bfI$). Clearly, the problem of deleting a tuple from the answer to $Q$ is the same as inserting a tuple into the answer to $\lnot Q$. Prior work on wrong answer repairs has focused on non-recursive monotone queries, primarily conjunctive queries or unions of conjunctive queries \cite{Cong2006annotation, CongFGLL12, FreireGIM15, Miao2018aggregation, Miao2018complexity, Miao2020functional, Miao2020results}. In contrast, the present paper focuses on semi-positive datalog and its fragments. While negation provides a direct link between missing and wrong answer repairs, the application of negation to unions of conjunctive queries yields \emph{universal} (rather than \emph{existential}) queries which are not expressible in the sub-classes of semi-positive datalog that we consider here.

\subparagraph*{Contributions.}
We consider four versions of the problem in the combined complexity setting: $\MAdecL$, $\MAboundL$, $\MAsizeL$, and $\MAminL$, where the problem is parametrized by a class $\logic$ of queries. Our main combined complexity results are summarized in the following table.

\begin{table}
\centering
\begin{tabular}{|c|c|c|c|c|}
\hline
\textbf{Query Class} & \MAdecL & \MAboundL & \MAsizeL & \MAminL \\
\hline
$\UCQsjf$   & $\P$
            & $\P$
            & $\P$
            & $\P$ \\
\hline
$\SJUneg$   & $\P$ 
            & $\P$ 
            & $\P$
            & $\P$ \\
\hline
$\SPUneg$   & $\P$
            & $\P$
            & $\P$
            & $\P$ \\
\hline
$\PJ$       & $\P$
            & $\NP$-compl.
            & $\OptP[\log(n)]$-compl.
            & $\FP^{\NP[n^2]}$ \\
\hline
$\UCQneg$   & $\P$
            & $\NP$-compl.
            & $\OptP[\log(n)]$-compl.
            & $\FP^{\NP[n^2]}$ \\
\hline
$\DL$       & $\EXP$-compl.
            & $\EXP$-compl.
            & $\FEXP$
            & $\FEXP$ \\
\hline
$\spDL$     & $\EXP$-compl.
            & $\twoEXP$
            & $\FtwoEXP$
            & $\FtwoEXP$ \\
\hline
$\RA$       & Undecidable
            & $\twoNEXP$
            & Uncomputable
            & Uncomputable \\
\hline
\end{tabular}
\caption{Combined complexity results. We write $\UCQsjf$ to denote the class of self-join-free $\UCQneg$ queries. Note that $\SJUneg$ (resp. $\SPUneg$) denotes the class of projection-free (resp. join-free) $\UCQneg$ queries. Similarly, $\PJ$ denotes the class of selection-free conjunctive queries. The tag ``sp'' for Datalog stands for ``semi-positive'' (i.e., programs having atomic negation). We formally define these classes in Section \ref{sec:prelim}.}
\label{table:combined}
\end{table}

Some of the results in Table \ref{table:combined} are not tight; in cases like this, the table depicts only our best upper bound for the complexity of the problem. For example, the upper bound we provide for $\MAminL$ for $\logic \in \{ \PJ, \UCQneg \}$ is $\FP^{\NP[O(n^2)]}$, which does not match our lower bound of $\FP^{\NP[O(\log(n))]}$. However, this is typical for such results on oracle complexity for (the function versions of) optimization problems\footnote{The formal definitions of $\FP^{\NP[O(\log(n))]}$ and $\FP^{\NP[O(n^2)]}$ can be found in Section~\ref{sec:prelim}.}. For example, to the authors' knowledge, the best known lower bound for constructing optimal solutions for optimization problems like minimum set cover, minimum vertex cover, maximum clique, and others, is $\FP^{\NP[O(\log(n))]}$, while the tightest known upper bound known for these problems is $\FP^{\NP[O(n)]}$. Another bound which is not tight is the complexity of $\MAbound{\spDL}$. We show in Section \ref{sec:combined} that this problem is $\EXP$-hard and contained in $\twoEXP$, but we leave open its precise complexity.

In addition to the results depicted in the Table~\ref{table:combined}, we also provide a minimum set cover-hardness of approximation result for $\MAmin{\PJ}$ via strict reductions (a special case of the $L$-reductions defined in \cite{PapadimitriouY91}; see Definition \ref{def:strict-reduction}). The class $\OptP$, introduced in \cite{Krentel88} and expounded further in \cite{GasarchKR95}, is used to capture the complexity of determining the optimal value for $\NP$-complete optimization problems. Given the $\NP$-completeness of $\MAbound{\PJ}$, this is the natural class to capture the complexity of $\MAsize{\PJ}$.

In data complexity, we study the $\MAdecQ$, $\MAboundQ$, and $\MAminQ$ problems, where the query $Q$ is fixed; the results are summarized in Table \ref{table:data}. Note that the complexity of the $\MAboundQ$, $\MAsizeQ$, and $\MAminQ$ problems for $\RA$ queries $Q$ varies from query to query; we discuss the decidability of these problems at the end of Section \ref{sec:data}. Furthermore, the results on $\negCQ$ follow immediately from the observations in the introduction involving negation, together with results from \cite{FreireGIM15, FreireGIM20} providing dichotomies for resilience of certain sub-classes of conjunctive queries. The $\MAsizeQ$ and $\MAminQ$ results for this class follow from standard arguments (\cite[pgs. 415-422]{papadimitriou2003computational}) using binary search with $\NP$ oracle queries to identify the size of an optimal repair as well as a polynomial-time construction of a minimal repair via $\NP$ oracle queries. We include this row to indicate that the various missing answer problems can be intractable, even in data complexity, for reasonable classes of queries.

\begin{table}
\centering
\begin{tabular}{|c|c|c|c|}
\hline
\textbf{Query Class} & \MAboundQ & \MAsizeQ & \MAminQ \\
\hline
$\spDL$     & $\P$
            & $\P$
            & $\P$ \\
\hline
$\negCQ$    & $\P$ or $\NP$-compl.
            & $\P$ or $\FP^{\NP[\log(n)]}$
            & $\P$ or $\FP^{\NP}$ \\
\hline
$\RA$       & Decidable
            & Computable
            & Computable \\
\hline
\end{tabular}
\caption{Data complexity results. We write $\negCQ$ to denote the class of queries of the form $\neg Q$ (see Remark \ref{remark:negation}), where $Q$ is a conjunctive query.}
\label{table:data}
\end{table}

\subparagraph*{Related work.}
Under combined complexity, the missing answer repair problem generalizes \emph{query satisfiability}~\cite{abiteboul1995foundations} (recall that a query $Q$ is satisfiable if there exists a database instance $\bfI$ where $Q(\bfI) \neq \emptyset$). More precisely, a Boolean query $Q$ is satisfiable if and only if there exists a repair for the input $\langle Q, \bfE, \epsilon \rangle$ to the missing answer decision problem, where $\bfE$ and $\epsilon$ are the empty database and empty tuple, respectively. In Section \ref{sec:combined}, we extend this observation to non-Boolean queries, and we use this relationship to transfer known complexity lower bounds for satisfiability to the missing answer problem for many classes of queries.

As mentioned in the introduction, there are, in general, many updates to a database which might yield a desired change in a view. Consequently, work on view updates has focused on finding updates with certain goals in mind: (1) minimizing the size of the instance update, (2) updating the instance while preserving specified integrity constraints, (3) determining if an update to the view exists which does not induce any other view ``side-effects,'' and (4) minimizing ``side-effects'' on the view. Note that \emph{view side-effects} refer to additional changes to the tuples appearing in the view beyond the original desired update.

We focus on view insertions which minimize the number of insertions and deletions made to the database (i.e., goal (1) above). This problem has been studied to some extent in prior work. In particular, it was shown in \cite{Cong2006annotation, CongFGLL12} that for selection-free conjunctive queries, under combined complexity, computing a minimal repair for view insertions is $\NP$-hard, while it can be done in polynomial-time for unions of conjunctive queries which lack either joins or projections. We refine this result in several ways. First, we provide a similar hardness result for selection-free conjunctive queries, even under the assumption that the schema is fixed. In contrast, \cite{Cong2006annotation} allows for arbitrarily many relations to occur in the schema, and their proof of hardness depends on this fact. Second, we show that the polynomial-time results are extendable to queries with atomic negation. Finally, we provide concrete upper bounds in $\OptP$ and $\FP^{\NP}$ for the functional versions of these problems ($\MAsize{}$ and $\MAmin{}$). Other work on insertion propagation has emphasized (2) and (3), particularly on computing updates to the source database which maintain integrity constraints such as functional dependencies without view side-effects \cite{Miao2016complexity, Miao2020functional}. This has also been the focus for much of the work on annotation propagation \cite{Cong2006annotation, CongFGLL12}. In the study of deletion propagation and resilience, all of (1) through (4) have been studied \cite{BunemanKT02, Cong2006annotation, CongFGLL12, KimelfeldVW12, Kimelfeld12, FreireGIM15, Miao2018complexity, Miao2018aggregation, Miao2020results, FreireGIM20, Miao2023deletion}, albeit not for query languages as powerful as those we study in this paper.

In \cite{Miao2018complexity}, the authors define a \emph{bounded insertion propagation problem}, which they show is $\Sigma^P_2$-complete for conjunctive queries under combined complexity. The problem defined in that paper is analogous to our definition of the $\MAbound{}$ problem, but differs in important ways. First, the schema is not fixed. Second, the relations are assumed to be associated with typed attributes, where the domain of each attribute is specified in the input. In contrast, we assume that the schema is fixed, and that any data (even outside the active domain of the input structure) can appear in repairs. Both of these differences feature prominently in the $\Sigma^P_2$-completeness proof in \cite[Theorem 4]{Miao2018complexity}, and so it is clear that the problem is computationally much different. In particular, a consequence of our results is that the $\MAbound{}$ problem for conjunctive queries is $\NP$-complete.

One significant difference between the work presented here and prior work on the missing and wrong answer problems is that we consider more powerful query languages. We consider conjunctive queries, unions of conjunctive queries, and datalog, as well as their extensions with negated (extensional) atoms. In particular, such extensions require that repairs consist of both insertions and deletions, as opposed to just insertions, which suffice only to repair missing answers for monotone queries. With respect to combined complexity, the relationship with satisfiability is for arbitrary classes of queries closed only under composition with weak forms of selection or projection. In contrast, previous work on wrong answers focuses on unions of conjunctive queries (UCQs) \cite{BunemanKT02, CongFGLL12} or just conjunctive queries (CQs) \cite{KimelfeldVW12, Kimelfeld12, FreireGIM15, FreireGIM20}. Another significant difference is that we allow for insertions with data that is not necessarily in the \emph{active domain} of the input database $\bfI$ or the query $Q$. For queries without negation, this makes no difference; however, as mentioned above, we consider queries with negated atoms. In contrast, prior work on missing answers for arbitrary first-order queries~\cite{GraedelTan24,XuZhaAlaTan18} limits insertions to those constructed from active domain elements.

It is also important to highlight a distinct approach to view updates originating in \cite{DayalB82}, where the authors define \emph{update translations} and provide conditions for a \emph{correct} translation of view updates. This work is extended in \cite{bancilhon1981update}, which introduces the notion of the \emph{complement} of the view,  which is a particular type of mapping from database instances to pairs of views. Note that the translation and complement definitions are independent of a given database model. In \cite{cosmadakis1984updates}, the authors apply this framework specifically to the relational database model, in the simplified setting with a single relation, functional dependencies, and views defined by projections on that relation. The authors provide several complexity results for decision problems related to view insertions and view complements. In general, these problems appear in the lower levels of the polynomial hierarchy, and more work is needed to determine the relationship of this framework to the missing answer repair problem studied here.

The issue of finding \emph{explanations} for missing answers has also been studied using a variety of techniques~\cite{ChapmanJ09,HerschelHT09, HerschelH10,TranC10,WuM13,RoyS14,BidoitHT14a,BidoitHT14b,RoyOS15}, and especially using formalizations of provenance \cite{BunemanKT01,GreenKT07,HuangCDN08,GraedelTan24,XuZhaAlaTan18} and of causality. The latter work discusses two ways of quantifying the relative importance of causes: \emph{responsibility}~\cite{MeliouGHKMS10,MeliouGMS10} and \emph{resilience}~\cite{FreireGIM15,FreireGIM20,MakhijaG23}. In fact, computing resilience is essentially computing repairs for wrong answers.

\section{Preliminaries}
\label{sec:prelim}
\subparagraph*{Basic notation and definitions.}
We fix countably infinite sets $\Var$ and $\Dom$ of \emph{variables} and \emph{constants}, respectively. \emph{Terms} are elements of $\Var \cup \Dom$. An \emph{assignment} is a partial map $g: \Var \to \Dom$. We write $x,y,z,u,v$ to denote variables, $a,b,c,d,e$ to denote constants, and $s,t$ to denote arbitrary terms. We use an overline notation to denote tuples of variables or constants, and we write $\len(\tup{a})$ to denote the length of a tuple $\tup{a}$. Given a tuple $\tup{t}$ of terms and an assignment $g$, we write $g(\tup{t})$ to denote the tuple of constants obtained by replacing all variables in $\tup{t}$ with their image under $g$. We will sometimes abuse notation by treating tuples as sets. We fix a finite schema $\bfR$ of \emph{extensional database predicates} (EDBs) and an infinite schema $\bfS$ of \emph{intensional database predicates} (IDBs). Each relation symbol $F \in \bfR \cup \bfS$ has an associated natural number \emph{arity} (denoted $\arity(F)$), and we assume $\bfS$ contains infinitely-many symbols of each arity.

\emph{Atoms} are either equalities of the form $s=t$, where $s$ and $t$ are terms, or expressions of the form $F(t_1,\hdots,t_k)$, where $F \in \bfR \cup \bfS$ with $\arity(F) = k$ and $t_1,\hdots,t_k$ are terms. An \emph{extensional} (resp. \emph{intensional}) atom is one constructed with a relation symbol from $\bfR$ (resp. $\bfS$). A \emph{positive literal} is an atom, and a \emph{negative literal} is an expression of the form $\lnot \alpha$, where $\alpha$ is a positive literal. We write $\alpha,\beta,\gamma$ to denote (positive or negative) literals. We write $\rel(\alpha)$ to denote the relation symbol occurring in a literal $\alpha$ and $\var(\alpha)$ to denote the variables occurring in $\alpha$. A \emph{fact} is an extensional atom with no variables. Given a subset $D \subseteq \Dom$, we write $\Facts(D)$ for all facts containing only constants in $D$. An \emph{instance} is a finite set of \emph{facts}; we denote instances by $\bfI$, $\bfJ$.

A \emph{datalog rule} is an expression $r$ of the form
\begin{equation}
\label{eq:rule}
S(\tup{x}) \df \alpha_1,\hdots,\alpha_n,
\end{equation}
where $S(\tup{x})$ is an intensional atom, $\tup{x}$ is a tuple over $\Var$ with $\len(\tup{x}) = \arity(S)$, each $\alpha_i$ is a positive literal with $\rel(\alpha_i) \in \bfR \cup \bfS$, and $\tup{x} \subseteq \tup{y} = \bigcup_{i \leq n} \var(\alpha_i)$. We refer to $S(\tup{x})$ as the \emph{head} of $r$, and $\alpha_1,\hdots,\alpha_n$ as the \emph{body} of $r$, denoted $\Body_r(\tup{y})$. While we do not allow constants to appear in the head of rules, these can be simulated using equality atoms in the body of the rule. We write $\free(r)$ to denote the variables occurring in the head of $r$ and $\bound(r)$ to denote the variables in the body of $r$ which do not occur in $\free(r)$. Additionally, we set $\var(r) = \free(r) \cup \bound(r)$. If $\tup{t}$ is a tuple of terms with $\len(\tup{y}) = \len(\tup{t})$, then we write $\Body_r(\tup{t})$ to denote the expression obtained by substituting the variables in $\tup{y}$ for those in $\tup{t}$.

A \emph{datalog program} $P$ is a finite sequence $r_1, \hdots, r_m$ of rules such that every intensional relation symbol which occurs in the body of some $r_i$ also must occur in the head of some $r_j$; each datalog program $P$ also has a designated intensional relation symbol $\ans_P$ denoting the \emph{answer predicate} for $P$. The \emph{arity} of a program $P$ (denoted by $\arity(P)$) is the arity of its answer predicate. We also assume that all datalog programs are \emph{safe}, meaning that every variable in each rule appears in some positive literal in the body of the rule.

A \emph{conjunctive query} is a datalog program containing a single rule whose body contains only extensional atoms, and a \emph{union of conjunctive queries} is a datalog program consisting of conjunctive queries, each having the same head. We write $\CQ$, $\UCQ$, and $\DL$ to denote the classes of conjunctive queries, unions of conjunctive queries, and datalog programs, respectively. A \emph{semi-positive datalog rule} has the same form as a datalog rule (see Equation \ref{eq:rule}), except that we allow each $\alpha_i$ to be a positive literal with $\rel(\alpha_i) \in \bfR \cup \bfS$, a negative literal with $\rel(\alpha_i) \in \bfR$, or an \emph{inequality atom} of the form $s \neq t$, where $s,t \in \Dom \cup \Var$. We write $\CQneg$, $\UCQneg$, and $\spDL$ for the extensions of $\CQ$, $\UCQ$, and $\DL$ with negated extensional atoms and inequality atoms. We will generally use $P$ to denote $\DL$ or $\spDL$ programs, and $Q$ to denote queries in $\UCQneg$ or its sub-classes. We write $\adom(Q)$ for the set of constants occurring in a query $Q$, and for a database instance $\bfI$, we also set $\adom(Q,\bfI) = \adom(Q) \cup \adom(\bfI)$.

In Section \ref{sec:combined}, we provide a characterization of which subclasses of $\UCQneg$ (i.e., $\SPJUneg$), defined by operations of the relational algebra, admit polynomial-time solvability of the missing answer problem. Rather than introduce the notation of the relational algebra, we instead look at the equivalent notions in our rule-based framework. We say that a semi-positive datalog program $P$ is \emph{self-join-free} if, for each rule $r$ in $P$, the body of $r$ does not contain more than one occurrence of a given relation symbol in $\Sigma$. We say that a semi-positive datalog program $P$ is \emph{projection-free} if, for each rule $r$ in $P$, we have that $\bound(r) = \emptyset$. We say that a semi-positive datalog program is \emph{selection-free} if it contains no occurrences of constants or equality atoms and none of its atoms contain duplicate variables. We say that $r$ is \emph{join-free} if the body of the rule contains only one atom. We write $\PJ$ for the selection-free sub-class of $\CQ$, $\SPUneg$ for the join-free sub-class of $\UCQneg$, and $\SJUneg$ for the projection-free sub-class of $\UCQneg$. We study these classes in Section \ref{sec:combined}.

As suggested in the introduction, the missing answer repair problem for a class $\logic$ of queries may be seen as a generalization of satisfiability problem for $\logic$, formally defined below.

\begin{definition}
\label{def:db-sat}
Let $\logic$ denote an arbitrary class of queries. The $\dbSATL$ problem is to determine, given a query $Q \in \logic$, whether or not there exists an instance $\bfI$ such that $Q(\bfI) \neq \emptyset$.
\end{definition}

It is well-known that this problem is undecidable for the relational calculus \cite{Trakhtenbrot50}, and hence also for relational algebra and stratified datalog. In addition to satisfiability, we will also compare the evaluation problem to the missing answer repair problem.

\begin{definition}
Let $\logic$ denote an arbitrary class of queries. Given an input of the form $\langle Q, \bfI, \tup{a} \rangle$, where $Q \in \logic$, $\bfI$ is an instance, and $\tup{a} \in \Dom$, the \emph{evaluation problem} for $\logic$, denoted $\EvalL$, is to determine whether or not $\tup{a} \in Q(\bfI)$.
\end{definition}

We write $\EvalQ$ to denote the data complexity version of this problem, where the query $Q$ is fixed, and inputs are of the form $\langle \bfI, \tup{a} \rangle$. Finally, for ease of reference, we define the following notion of the negation of a query.

\begin{remark}
\label{remark:negation}
In the introduction, we defined the negation of a query as follows: given a query $Q$, let $\lnot Q$ (the \emph{negation} of $Q$) denote a query which, on each database instance $\bfI$, returns the complement of $Q(\bfI)$ (with respect to all tuples of appropriate arity over the active domain of $\bfI$). Note that we do not specify to which query class the negation of a query belongs; we define it only in terms of the map between database instances that it represents. However, it should be noted that if $Q$ is expressible in the relational algebra ($\RA$), then $\lnot Q$ is also expressible in $\RA$ as the difference of $D^\arity(Q)$ and $Q$, where $D^\arity(Q)$ is the $\UCQ$ query that returns the set of all tuples of length $\arity(Q)$ over the active domain of an instance.
\end{remark}

\subparagraph*{Complexity Definitions}
We assume familiarity with the usual classes $\P$, $\NP$, $\EXP$, $\NEXP$, $\twoEXP$, and $\twoNEXP$. We prepend an ``F'' to a class to indicate the corresponding class of function problems (e.g., $\FP$, $\FEXP$).

\begin{definition}
An approximation problem is a triple $P = (\mathcal{I},\mathcal{S},\cost)$, where $\mathcal{I}$ is a set of \emph{instances}, $\mathcal{S}$ is a map from instances $x$ to sets $\mathcal{S}(x)$ of \emph{feasible solutions}, and $\cost$ is a map from pairs $(x,S(x))$ to natural numbers. The \emph{optimal cost} for an instance $x \in \mathcal{I}$ is $\cost^\ast(x) := \min_{y \in \mathcal{S}(x)} \cost(x,y)$. We write $\opt(x)$ to denote the set of solutions such that $\cost(x) = \cost^\ast(x)$.
\end{definition}

\begin{definition}[Strict reduction, \cite{OrponenM87}]
\label{def:strict-reduction}
Consider approximation problems $P_1 = (\mathcal{I}_1,\mathcal{S}_1,\cost_1)$ and $P_2 = (\mathcal{I}_2,\mathcal{S}_2,\cost_2)$. A \emph{strict reduction} from $P_1$ to $P_2$ is a pair of polynomial-time computable functions $(f,h)$ where $f: \mathcal{I}_1 \to \mathcal{I}_2$ and $h$ is a map from pairs $(x,y)$, where $x \in \mathcal{I}$ and $y \in \mathcal{S}_2(f(x))$, to elements of $\mathcal{S}_1(x)$, such that
\begin{enumerate}
\item If $x \in \mathcal{I}_1$, $y \in \mathcal{S}_2(f(x))$, and $y \in \opt(f(x))$, then $h(x,y) \in \opt(x)$, and
\item for all $x \in \mathcal{I}_1$ and $y \in \mathcal{S}(f(x))$, we have that $\cost(h(x,y)) \leq \cost(y)$.
\end{enumerate}
\end{definition}

\begin{definition}
The class $\FP^\NP$ is the class of function problems computable in polynomial time with access to an $\NP$ oracle. For a time-constructible function $l(n)$, we write $\FP^\NP[l(n)]$ for the class of function problems computable in polynomial time with $O(l(n))$ queries to an $\NP$ oracle, where $n$ is the size of the input.
\end{definition}

\begin{definition}[$\NP$ metric Turing machines, \cite{Krentel88}]
An $\NP$ metric Turing machine $M$ is a nondeterministic polynomially-time-bounded Turing machine such that every branch writes a binary number and accepts or rejects. The output of such a machine on an input $x \in \Sigma^\ast$ is the smallest value on any accepting branch of $M$ on the input $x$, denoted $\opt^M(x)$.
\end{definition}

\begin{definition}[Optimization Polynomial Time, \cite{Krentel88}]
A function $f: \Sigma^\ast \to \mathbb{N}$ is in $\OptP$ if there is an $\NP$ metric Turing machine $M$ such that $f(x) = \opt^M(x)$ for all $x \in \Sigma^\ast$. Furthermore, $f$ is in $\OptP[l(n)]$ if $f \in \OptP$ and the binary encoding of $f(x)$ is bounded by $l(\card{x})$ for all $x \in \Sigma^\ast$.
\end{definition}

\begin{definition}[Metric reductions, \cite{Krentel88}]
\label{def:metric-reduction}
Let $P_1,P_2: \Sigma^\ast \to \mathbb{N}$. A \emph{metric reduction} from $P_1$ to $P_2$ is a pair of polynomial-time computable functions $(f,g)$ where $f: \Sigma^\ast \to \Sigma^\ast$ and $g: \Sigma^\ast \times \mathbb{N} \to \mathbb{N}$ such that $P_1(x) = g(x,P_2(f(x)))$ for all $x \in \Sigma^\ast$. A metric reduction $(f,g)$ is \emph{linear} if the map $k \mapsto g(x,k)$ is linear and \emph{exact} if $g(x,k)=k$.
\end{definition}

\section{Formalizing the Missing Answer Problem}
\label{sec:ma}
To address the problem of finding minimal repairs for missing answers, we also study the problems of determining if a repair exists at all, determining if a repair within a given size bound exists, and determining the size of a minimal repair. Thus we define repairs as updates placing the desired tuple in the query answer, without reference to the size of the update.

\begin{definition}[Updates]
\label{def:repairs}
An \emph{update} for a database instance $\bfI$ is a pair $\Rep = (\Ins,\Del)$, where $\Ins \subseteq \Facts(\Dom) \setminus \bfI$ and $\Del \subseteq \bfI$ are finite sets of facts to be \emph{inserted} to and \emph{deleted} from $\bfI$. Given an update $\Rep$ for $\bfI$, we define $\repI = \left( \bfI \cup \Ins \right) \setminus \Del$. The size of an update $\Rep$ is $\card{\Rep} = \card{\Ins \cup \Del}$. We say that $\Rep$ is an \emph{update over $D \subseteq \Dom$} if $\Ins \cup \Del \subseteq \Facts(D)$. We say that an update $\Rep$ is a \emph{repair} with respect to a triple $\langle Q, \bfI, \tup{a} \rangle$ if $\tup{a} \in Q(\repI)$.
\end{definition}

In other words, the size of an update is the number of insertions and deletions that it contains. Note that $\card{\Rep} = \card{\bfI \oplus \repI}$; i.e., a \emph{minimal} repair is one which minimizes the size of the symmetric difference between the input instance $\bfI$ and the updated instance $\repI$. This measure of minimality is similar to the notion of a repair in consistent query answering under the cardinality-based repair semantics \cite{arenas2003answer, lopatenko06complexity}.

\begin{definition}[The missing answer problems]
    \label{def:MA}
    Let $\logic$ denote an arbitrary class of queries. We define four computational problems.
    \begin{enumerate}
        \item $\MAdecL$ (the missing answer decision problem): given $\langle Q, \bfI, \tup{a} \rangle$, determine whether or not there exists a repair $\Rep$ with respect to $\langle Q, \bfI, \tup{a} \rangle$;
        \item $\MAsizeL$ (the missing bounded missing answer repair problem): given $\langle Q, \bfI, \tup{a}, k \rangle$, determine whether or not there exists a repair $\Rep$ with respect to $\langle Q, \bfI, \tup{a} \rangle$ such that $\card{Rep} \leq k$;
        \item $\MAsizeL$ (the missing answer repair size problem): given an input $\langle Q, \bfI, \tup{a} \rangle$ compute the minimum size of a repair $\Rep$ with respect to $\langle Q, \bfI, \tup{a} \rangle$, or to otherwise indicate that a repair does not exist; and
        \item $\MAminL$ (the missing answer repair problem): given $\langle Q, \bfI, \tup{a} \rangle$, compute a minimum-cardinality repair $\Rep$ with respect to $\langle Q, \bfI, \tup{a} \rangle$, or otherwise indicate that a repair does not exist;
    \end{enumerate}
    where $Q \in \logic$, $\bfI$ is an instance, $\tup{a}$ is a tuple over $\Dom$ with $\tup{a} = \arity(Q)$, and $k \in \mathbb{N}$.
\end{definition}

Note that the above definitions allow arbitrary queries of the class $\logic$ to appear in the input -- this is the \emph{combined complexity} setting, discussed in Section \ref{sec:combined}. We will also study $\MAboundQ$, the variant of the bounded missing answer repair problem  in which the query $Q$ is fixed, inputs are of the form $\langle \bfI, \tup{a}, k \rangle$, and the goal is to determine whether or not a repair $\Rep$ with respect to $\langle Q, \bfI, \tup{a} \rangle$ such that $\card{\Rep} \leq k$ exists. Similarly, we study $\MAsizeQ$ and $\MAminQ$, the missing answer repair size and missing answer repair problems in which the query $Q$ is fixed, inputs are of the form $\langle \bfI, \tup{a} \rangle$, and the goal is to compute a repair $\Rep$ with respect to $\langle Q, \bfI, \tup{a} \rangle$ of minimal cardinality. The $\MAboundQ$, $\MAsizeQ$, and $\MAminQ$ problems are the \emph{data complexity} versions, which are the focus of Section \ref{sec:data}. In both the data and combined complexity settings, we consider the schema $\bfR$ of extensional database predicates to be fixed.

\subparagraph*{Preliminary complexity observations.}
Before turning to our main results, we begin with some preliminary observations.

\begin{proposition}
\label{prop:eval-to-rep}
Let $\logic$ be a class of queries. Then
\begin{enumerate}
\item $\EvalL \leq_\P \MAboundL$; and
\item $\EvalL$ can be decided in linear time with a single call to an $\MAsizeL$ (or $\MAminL$) oracle.
\end{enumerate}
\end{proposition}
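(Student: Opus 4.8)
The plan is to reduce everything to a single structural fact: the empty update behaves as a ``size-zero repair'' exactly when the queried tuple is already an answer. First I would record this observation. By Definition~\ref{def:repairs}, the empty update $\Rep = (\emptyset,\emptyset)$ is the unique update with $\card{\Rep} = 0$, and it yields $\repI = (\bfI \cup \emptyset) \setminus \emptyset = \bfI$. Hence $\Rep = (\emptyset,\emptyset)$ is a repair with respect to $\langle Q,\bfI,\tup{a}\rangle$ if and only if $\tup{a} \in Q(\repI) = Q(\bfI)$, i.e.\ if and only if $\langle Q,\bfI,\tup{a}\rangle$ is a yes-instance of \EvalL. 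This equivalence is the engine of both parts.

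For part (1), I would take the polynomial-time map $f$ that sends an input $\langle Q,\bfI,\tup{a}\rangle$ of \EvalL to the input $\langle Q,\bfI,\tup{a},0\rangle$ of \MAboundL, simply appending the bound $k = 0$; this is computable in linear time. For correctness, note that $\langle Q,\bfI,\tup{a},0\rangle$ is a yes-instance of \MAboundL iff some repair has size at most $0$. Since repair sizes are nonnegative integers and the only update of size $0$ is the empty one, this holds iff the empty update is a repair, which by the observation holds iff $\tup{a} \in Q(\bfI)$. Thus $f$ is a correct many-one polynomial-time reduction, establishing $\EvalL \leq_\P \MAboundL$.

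For part (2), I would issue a single oracle call on the unchanged input $\langle Q,\bfI,\tup{a}\rangle$ and inspect the answer. Using an \MAsizeL oracle, I accept iff the oracle returns the value $0$, and reject if it either reports that no repair exists or returns a strictly positive minimum size; correctness is immediate, since the minimum repair size equals $0$ exactly when the empty update is a repair, i.e.\ exactly when $\tup{a}\in Q(\bfI)$. Using an \MAminL oracle instead, the oracle returns a minimum-cardinality repair $\Rep = (\Ins,\Del)$ (or reports that none exists); here I accept iff the returned repair is empty, equivalently iff $\card{\Rep} = 0$. This is correct because $0$ is the least attainable repair size, so a minimum-cardinality repair has size $0$ precisely when the empty update is a repair. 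In each case the post-processing---comparing an integer to $0$, or testing $\Ins$ and $\Del$ for emptiness---runs in linear time.

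I do not anticipate a genuine obstacle: the statement becomes routine once the correspondence between size-$0$ updates and membership in $Q(\bfI)$ is isolated. The only points meriting explicit care are that the empty update is the \emph{unique} update of size $0$ (immediate from $\card{\Rep} = \card{\Ins \cup \Del}$) and that $\repI = \bfI$ for that update, both read directly off Definition~\ref{def:repairs}; these are exactly what makes the bound $k = 0$ in part (1) and the target value $0$ in part (2) line up with the evaluation problem.
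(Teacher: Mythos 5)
Your proof is correct and follows essentially the same route as the paper: part (1) via the reduction $\langle Q,\bfI,\tup{a}\rangle \mapsto \langle Q,\bfI,\tup{a},0\rangle$, and part (2) via a single oracle call checking whether the minimum repair size is $0$ (equivalently, whether the returned minimal repair is the empty update). Your explicit isolation of the fact that the empty update is the unique size-$0$ update with $\repI = \bfI$ is just a more detailed spelling-out of what the paper leaves implicit.
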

\begin{proof}
The first claim is by the reduction $\langle Q, \bfI, \tup{a} \rangle \mapsto \langle Q, \bfI, \tup{a}, 0 \rangle$. The second claim follows from the observation that $\MAsizeL(Q,\bfI,\tup{a}) = 0$ if and only if $\MAminL(Q,\bfI,\tup{a}) = (\emptyset,\emptyset)$ if and only if $\EvalL(Q,\bfI,\tup{a})$ returns true. The linear running time is only required in order to copy the input to the oracle tape, call the oracle, and return the answer.
\end{proof}

Note that the relationships in Proposition \ref{prop:eval-to-rep} also hold for the data complexity versions of the problems. Since $\Eval{\CQ}$ is $\NP$-complete \cite{Chandra1977optimal}, it follows immediately from Proposition \ref{prop:eval-to-rep} that $\MAbound{\CQ}$ is $\NP$-hard, and that $\MAmin{\CQ}$ cannot be solved in polynomial time (unless $\P = \NP$). We refine these statements significantly in Section \ref{sec:combined}. Proposition \ref{prop:eval-to-rep} also implies that $\MAbound{\DL}$ is $\EXP$-hard, since $\Eval{\DL}$ is $\EXP$-complete~\cite{DantsinEGV97} (implicit in~\cite{Vardi82,Immerman86}).

We now consider some basic relationships between the different versions of the missing answer problem. Clearly, the $\MAminL$ problem is at least as hard as $\MAdecL$, $\MAboundL$, and $\MAsizeL$; similarly, $\MAminQ$ is at least as hard as $\MAsizeQ$ and $\MAboundQ$. Furthermore, $\MAboundL$ is the natural ``decision version'' of the $\MAminL$ optimization problem. There is no immediate relationship between $\MAboundL$ and $\MAdecL$, since for a given triple $\langle Q, \bfI, \tup{a} \rangle$, there may be a repair, but not one of size $k$. However, a straightforward relationship can be identified for classes of queries with the following property.

\begin{definition}[Computable bounded repairs]
\label{def:bounded-repairs}
We say that a class of queries $\logic$ \emph{admits computable bounded repairs} if there exists a computable function $\beta_\logic$ which takes as input tuples of the form $\langle Q, \bfI, \tup{a} \rangle$ and returns a finite set $S \subseteq \Dom$ such that, if a repair $\Rep$ with respect to $\langle Q, \bfI, \tup{a} \rangle$ exists, then a repair $\Rep'$ with respect to $\langle Q, \bfI, \tup{a} \rangle$ over $S$ also exists.
\end{definition}

The intuition for Definition \ref{def:bounded-repairs} is in the name: for a class of queries $\logic$ which admits computable bounded repairs, there is a computable function which outputs a finite search space of repairs for any missing answer input for that class. A notion similar to Definition \ref{def:bounded-repairs} will appear in Section \ref{sec:data} when analyzing the data complexity of the $\MAminQ$ problem for $\spDL$ queries $Q$ (cf. Definition \ref{def:constant-sized-repairs}). It is straightforward from Definition \ref{def:bounded-repairs} to verify the following proposition.

\begin{proposition}
\label{prop:dec-to-bound}
If $\logic$ is a class of queries which admits computable bounded repairs and $\EvalL$ is decidable, then the map $\langle Q, \bfI, \tup{a} \rangle \to \langle Q, \bfI, \tup{a}, p_\Sigma(\card{\boundL(Q, \bfI, \tup{a})}) \rangle$ is a computable reduction from $\MAdecL$ to $\MAboundL$, where $p_\Sigma(n) = \sum_{R \in \Sigma} n^{\arity(R)}$.
\end{proposition}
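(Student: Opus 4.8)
The plan is to verify the two requirements for the map $f\colon \langle Q, \bfI, \tup{a} \rangle \mapsto \langle Q, \bfI, \tup{a}, p_\Sigma(\card{\boundL(Q,\bfI,\tup{a})}) \rangle$ to be a computable many-one reduction: that $f$ is total and computable, and that $\langle Q, \bfI, \tup{a} \rangle$ admits a repair exactly when $f(\langle Q, \bfI, \tup{a} \rangle)$ admits a repair within the stated size bound. Write $S = \boundL(Q,\bfI,\tup{a})$ and $k = p_\Sigma(\card{S})$. Computability of $f$ is immediate from the hypothesis that $\logic$ admits computable bounded repairs: $\boundL$ is computable, so $S$ and $\card{S}$ are obtained effectively, and since the schema $\Sigma$ is fixed and finite with known arities, $k = \sum_{R \in \Sigma} \card{S}^{\arity(R)}$ is then a finite arithmetic computation.

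The heart of the argument is the observation that $k$ counts \emph{exactly} the facts that can be built over $S$. A fact in $\Facts(S)$ is specified by a choice of relation symbol $R \in \Sigma$ together with a tuple in $S^{\arity(R)}$, so $\card{\Facts(S)} = \sum_{R \in \Sigma} \card{S}^{\arity(R)} = p_\Sigma(\card{S}) = k$. Consequently every update $\Rep = (\Ins, \Del)$ over $S$ satisfies $\Ins \cup \Del \subseteq \Facts(S)$, whence $\card{\Rep} = \card{\Ins \cup \Del} \leq \card{\Facts(S)} = k$. Thus the single number $k$ caps the size of all updates over $S$ simultaneously.

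The equivalence of instances then follows. If $\langle Q, \bfI, \tup{a} \rangle$ has a repair, the bounded-repairs property of Definition \ref{def:bounded-repairs} yields a repair $\Rep'$ over $S$, and the counting bound gives $\card{\Rep'} \leq k$; hence $f(\langle Q, \bfI, \tup{a} \rangle)$ is a positive instance of $\MAboundL$. Conversely, a repair of size at most $k$ is in particular a repair, so a positive instance of $\MAboundL$ in the image yields a positive instance of $\MAdecL$.

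I anticipate no real obstacle: the one point deserving care is that the counting step is an \emph{identity} $\card{\Facts(S)} = p_\Sigma(\card{S})$ rather than merely an inequality, since it is this exact equality that certifies $k$ as an upper bound valid for every update over $S$ at once. The decidability of $\EvalL$ is not itself needed to show that $f$ is a correct reduction; it is listed among the hypotheses for the intended downstream use, in which the reduction transfers decidability between $\MAdecL$ and $\MAboundL$.
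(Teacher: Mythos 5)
Your proof is correct and follows exactly the argument the paper intends: the paper itself gives no proof, stating only that the proposition is ``straightforward from Definition \ref{def:bounded-repairs},'' and your verification (computability of the map from computability of $\boundL$, plus the exact count $\card{\Facts(S)} = p_\Sigma(\card{S})$ bounding every update over $S$) is that straightforward argument spelled out. Your closing observation that the decidability of $\EvalL$ is not needed for the correctness of the reduction itself is also accurate.
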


Importantly, although the $\MAboundL$ problem for a class $\logic$ of queries appears to have the flavor of an $\NP$ problem, this is not true in general. As an example, the combined complexity of $\DL$ evaluation is $\EXP$-complete~\cite{DantsinEGV97,Vardi82,Immerman86}, and so $\MAboundL$ is not in $\NP$ (unless $\NP = \EXP$), since $\Eval{\DL} \leq_P \MAbound{\DL}$ per Proposition \ref{prop:eval-to-rep}. However, since inputs to $\MAboundL$ include the integer $k$ which bounds the total number of insertions and deletions in witness repairs, it suffices to consider repairs which are at most \emph{exponentially-large} with respect to the input. This allows us to state a general upper bound, which requires only that the class of queries in question has an evaluation problem in $\NEXP$, and that the class contains only \emph{generic} queries, as defined below.

\begin{definition}[Generic queries]
\label{def:generic-queries}
Let $\rho: \Dom \to \Dom$ be an arbitrary bijective map. Given an instance $\bfI$, we write $\rho(\bfI)$ to denote the instance obtained by replacing each constant in each fact of $\bfI$ with its image under $\rho$. We define $\rho(\Rep)$ similarly for an update $\Rep$. A query $Q$ is $D$-generic for a subset $D \subseteq \Dom$ if, for any database $\bfI$ and any bijective map $\rho: \Dom \to \Dom$ which fixes $D$ point-wise (i.e., such that $\rho(d) = d$ for all $d \in D$), we have that $\rho(Q(\bfI)) = Q(\rho(\bfI))$. We say that a query $Q$ is \emph{generic} if it is $\adom(Q)$-generic.
\end{definition}

In other words, a query is generic if its answer is invariant to renaming constants not explicitly appearing in the active domain of the query. We are now ready to state and prove the general upper bound for $\MAboundL$.

\begin{proposition}
\label{prop:2NEXP-generic-MAbound}
Let $\logic$ be a class of generic queries. If $\EvalL$ is in $\NEXP$, then $\MAboundL$ is in $\twoNEXP$.
\end{proposition}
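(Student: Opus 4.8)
The plan is to give a nondeterministic doubly-exponential-time procedure that guesses a candidate repair and verifies it using the assumed $\NEXP$ procedure for $\EvalL$. Two obstacles prevent the naive ``guess a repair, then evaluate'' strategy from landing in $\NEXP$: (i) the domain $\Dom$ from which insertions may draw constants is infinite, so there is no finite search space to guess from a priori; and (ii) the bound $k$ is part of the input and encoded in binary, so a repair of size $\le k$ may contain up to $2^{O(n)}$ facts, making $\repI$ itself exponentially large in the input size $n$.

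To handle obstacle (i), I would invoke genericity to confine insertions to a finite pool of constants. Let $D = \adom(Q,\bfI) \cup \{\tup a\}$ collect the constants we are forced to track, and fix a pool $F \subseteq \Dom \setminus D$ of \emph{fresh} constants with $\card F = k \cdot a_{\max}$, where $a_{\max}$ is the (fixed) maximum arity of a relation symbol in $\Sigma$. The claim is that a repair of size $\le k$ exists if and only if one exists using only constants from $D \cup F$. The forward direction is the substantive step: given a repair $\Rep = (\Ins,\Del)$ with $\tup a \in Q(\repI)$, the set $\Ins$ contains at most $k$ facts of arity at most $a_{\max}$, hence uses at most $k \cdot a_{\max}$ distinct constants outside $D$, so there is a bijection $\rho \colon \Dom \to \Dom$ fixing $D$ pointwise that sends these constants into $F$. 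Since $\rho$ fixes $\adom(Q) \subseteq D$, genericity gives $\tup a = \rho(\tup a) \in \rho(Q(\repI)) = Q(\rho(\repI))$; moreover $\rho(\repI) = (\bfI \cup \rho(\Ins)) \setminus \Del$ because $\rho$ fixes $\adom(\bfI)$ and $\Del \subseteq \bfI$. One checks that $\rho(\Ins)$ is still disjoint from $\bfI$ (a $\rho$-image either equals an original $D$-fact of $\Ins$ or uses a constant of $F$, in both cases avoiding $\bfI$), so $(\rho(\Ins),\Del)$ is a genuine repair of the same size using only constants in $D \cup F$.

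With the search space finite, the algorithm is: compute $D$ and $F$; nondeterministically guess $\Ins \subseteq \Facts(D \cup F) \setminus \bfI$ and $\Del \subseteq \bfI$ with $\card{\Ins \cup \Del} \le k$; form $\repI = (\bfI \cup \Ins) \setminus \Del$; and accept iff the $\NEXP$ procedure for $\EvalL$ accepts on input $\langle Q, \repI, \tup a \rangle$. For the complexity analysis (which addresses obstacle (ii)): since $k \le 2^{O(n)}$, the pool $F$, the guessed repair, and the instance $\repI$ all have size $2^{O(n)}$, so the first three steps run within $\NEXP$ time. The final step feeds an input of size $m = 2^{O(n)}$ to a procedure running in nondeterministic time $2^{\mathrm{poly}(m)}$, for a total running time of $2^{\mathrm{poly}(2^{O(n)})} = 2^{2^{O(n)}}$, i.e.\ $\twoNEXP$.

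I expect the main obstacle to be the genericity reduction to a finite constant pool, specifically verifying that the renaming $\rho$ yields a genuine repair of the same cardinality (that $\rho(\Ins)$ stays disjoint from $\bfI$ and that $\rho$ fixes the output tuple $\tup a$), together with bounding $\card F$ correctly via the fixed maximum arity of the schema. Once the finite pool is established, the complexity bookkeeping that distinguishes this bound from $\NEXP$ is routine.
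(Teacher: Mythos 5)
Your proposal is correct and follows essentially the same route as the paper's proof sketch: use genericity to restrict repairs to $\adom(Q,\bfI) \cup \tup{a}$ plus a pool of fresh constants of size polynomial in $k$, nondeterministically guess the repair, and then run the $\NEXP$ evaluation procedure on the (possibly exponentially large) instance $\repI$, yielding the $\twoNEXP$ bound. Your write-up is in fact somewhat more careful than the paper's sketch, both in sizing the fresh pool as $k \cdot a_{\max}$ rather than $k$ and in verifying that the renamed insertion set remains disjoint from $\bfI$.
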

\begin{proof}[Proof (sketch).]
By genericity, for an input $\langle Q, \bfI, \tup{a}, k \rangle$ of size $N$, it suffices to consider repairs over $\tup{a} \cup \adom(Q,\bfI) \cup C$, where $C$ is a domain of fresh constants of size $k$. The algorithm proceeds as follows: non-deterministically generate an update $\Rep$ of cardinality at most $k$, and then non-deterministically generate a possible witness for $\EvalL$ for the instance $\repI$. The algorithm concludes by computing $\EvalL(\repI)$ and returning the answer. This algorithm is clearly correct. Furthermore, note that, since $k$ is represented in binary, the instance $\repI$ may be exponentially-larger than the encoding of the input. Then  because the witness for $\EvalL$ may be exponentially-larger than $\repI$, the algorithm is in $\twoNEXP$.
\end{proof}

Since $\Eval{\spDL}$ is in $\EXP$ and $\Eval{\RA}$ is in $\PSPACE$~(\cite{abiteboul1995foundations}), the above result implies that $\MAbound{\DL}$, $\MAbound{\spDL}$, and $\MAbound{\RA}$ are in $\twoNEXP$. Note that the argument for Proposition \ref{prop:2NEXP-generic-MAbound} also yields a $\twoNEXP$ upper bound for the complexity of $\MAboundL$ even if we assume that $\EvalL$ is in $\EXP$; we state the proposition for the hypothesis that $\EvalL$ is in $\NEXP$ to make it as general as possible. In Section \ref{sec:combined}, we refine the upper bound for $\MAbound{\DL}$ (resp. $\MAbound{\spDL}$) down to $\EXP$ (resp. $\twoEXP$) by giving a $\FEXP$ (resp. $\FtwoEXP$) algorithm for $\MAmin{\DL}$ (resp. $\MAmin{\spDL}$).

While $\MAboundL$ is not always an $\NP$ problem, we can give a very simple general criterion for $\MAboundL$ to be in $\NP$. Say that a class of queries $\logic$ \emph{admits polynomially-bounded repairs} if $\logic$ admits computable bounded repairs and $\beta_\logic$ is computable in polynomial time. Then by a similar algorithm to the proof of Proposition \ref{prop:2NEXP-generic-MAbound}, we obtain the following proposition.

\begin{proposition}
\label{prop:general-combined-UB}
Let $\logic$ be a class of generic queries. If $\logic$ admits polynomially-bounded repairs and $\EvalL$ is in $\NP$, then $\MAboundL$ is in $\NP$.
\end{proposition}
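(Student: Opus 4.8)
The plan is to refine the nondeterministic guess-and-check algorithm from the proof of Proposition~\ref{prop:2NEXP-generic-MAbound}, now using the polynomial bound on repairs to keep every guessed object of polynomial size. First I would compute $S := \boundL(Q,\bfI,\tup{a})$, which by hypothesis is a subset of $\Dom$ produced in polynomial time, hence of polynomial size, and set $m := \card{\Facts(S)}$. Since the schema $\Sigma$ is fixed, $m$ is polynomial in the input, and it is an upper bound on the size of any repair \emph{over} $S$: such a repair $\Rep = (\Ins,\Del)$ has $\Ins \subseteq \Facts(S) \setminus \bfI$ and $\Del \subseteq \bfI \cap \Facts(S)$, so $\card{\Rep} \leq \card{\Facts(S)} = m$.

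The crux is to show that it suffices to search for a repair of polynomial size over a polynomial-size domain, which I would establish by a case analysis of $k$ against $m$. Write $D_0 := \adom(Q,\bfI) \cup \tup{a}$ and let $w := \max_{R \in \Sigma} \arity(R)$, a constant. If $k \geq m$, then a repair of size at most $k$ exists if and only if \emph{some} repair exists at all, since by Definition~\ref{def:bounded-repairs} the existence of any repair implies the existence of one over $S$, which has size at most $m \leq k$; hence here it suffices to guess a repair over $S$. If instead $k < m$, then $k$ is itself polynomially bounded, and by genericity (Definition~\ref{def:generic-queries}), exactly as in Proposition~\ref{prop:2NEXP-generic-MAbound}, any repair of size at most $k$ uses at most $kw$ constants outside $D_0$ (only insertions can introduce fresh constants, since $\Del \subseteq \bfI$), and renaming these to a canonical set $C$ of fresh constants yields a repair of the same size at most $k$ over $D_0 \cup C$, with $\card{C} \leq kw < mw$ polynomial. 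Taking $C$ to be a fixed canonical set of $mw$ fresh constants disjoint from $D_0 \cup S$, in either case a witnessing repair of size at most $\min(k,m)$ exists over the polynomial-size domain $D_0 \cup S \cup C$.

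With this search space fixed, the algorithm is: compute $S$; nondeterministically guess a repair $\Rep$ of size at most $k$ whose facts lie in $\Facts(D_0 \cup S \cup C)$, a set of polynomial size, so $\Rep$ is described by polynomially many bits; form $\repI = (\bfI \cup \Ins) \setminus \Del$; and accept if and only if $\card{\Rep} \leq k$ and $\tup{a} \in Q(\repI)$, where the latter is checked by additionally guessing the $\NP$ witness for $\EvalL$ on the polynomial-size input $\langle Q, \repI, \tup{a} \rangle$. Correctness is immediate from the case analysis, and the whole computation is polynomially bounded and nondeterministic, giving $\MAboundL \in \NP$.

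I expect the main obstacle to be that $k$ is encoded in binary and may be exponentially large, so a repair of size up to $k$ need not be of polynomial size; this is precisely what forces the weaker $\twoNEXP$ bound in Proposition~\ref{prop:2NEXP-generic-MAbound}. The key observation is that the two hypotheses cover complementary regimes: polynomially-bounded repairs control the ``large $k$'' case, where only existence matters and a repair over $S$ is automatically within the bound, while genericity controls the ``small $k$'' case, where the number of fresh constants is bounded by $O(k)$ and hence polynomial. Neither hypothesis alone suffices, and it is their combination that caps the relevant repair size at the polynomial quantity $\min(k,m)$.
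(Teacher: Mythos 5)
Your proof is correct and follows essentially the route the paper intends: the paper gives no explicit argument here, saying only that the claim follows ``by a similar algorithm to the proof of Proposition~\ref{prop:2NEXP-generic-MAbound},'' and your refinement of that guess-and-check algorithm is exactly the right instantiation. The case split on $k$ versus $\card{\Facts(S)}$ (bounded repairs handling large $k$, genericity handling small $k$) is a detail the paper leaves implicit but that is genuinely needed, since Definition~\ref{def:bounded-repairs} does not promise that a repair over $S$ respects the size bound $k$; you supply it correctly.
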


Recall that a query $Q$ in $\UCQneg$ is satisfied in an instance if there exists a satisfying assignment from the domain of the instance to one of the conjunctive queries in the definition of $Q$. Let $m$ denote the number of variables in the conjunctive query in the definition of $Q$ with the most variables. The range of a satisfying assignment for $Q$ has size at most $m$, and so it suffices to consider repairs over the domain $\tup{a} \cup \adom(Q,\bfI) \cup C$, where $C$ is a set of $m$ fresh constants. We conclude that $\UCQneg$ admits polynomially-bounded repairs. Furthermore, it is already known that $\Eval{\UCQneg}$ is in $\NP$~\cite{abiteboul1995foundations}. Hence $\UCQneg$ satisfies the hypotheses of Proposition \ref{prop:general-combined-UB}, and so the $\MAbound{\UCQneg}$ problem (and $\MAboundL$ for all syntactic fragments $\logic$ of $\UCQneg$) are in $\NP$.

\section{Combined Complexity}
\label{sec:combined}
We now show that for many classes of queries $\logic$, the $\MAdecL$ (see Definition \ref{def:MA}) and $\dbSATL$ (see Definition \ref{def:db-sat}) problems are mutually polynomial-time reducible, and we deduce a number of complexity results as a corollary. We then show that $\MAmin{\UCQsjf}$, $\MAmin{\SJUneg}$, and $\MAmin{\SPUneg}$ can be solved in polynomial time. From this, it also follows that $\MAbound{\UCQsjf}$, $\MAbound{\SJUneg}$, and $\MAbound{\SPUneg}$ are in $\P$.  Recall that $\SJUneg$ and $\SPUneg$ are the projection-free and join-free fragments of $\UCQneg$, respectively. Then, we provide a strict polynomial-time reduction, which doubles as a metric reduction (see Section~\ref{sec:prelim} for the reduction definitions), which shows that $\MAbound{\PJ}$ is $\NP$-complete and $\MAsize{\PJ}$ is $\OptP[\log(n)]$-hard. We conclude by showing that there exist an $\OptP[\log(n)]$ algorithm for $\MAsize{\PJ}$, an $\FP^{\NP}$ algorithm for $\MAmin{\UCQneg}$, and a $\FEXP$ algorithm for $\MAminL$.

\subparagraph*{The missing answer decision problem.}
\label{subsec:madec}
For our results relating $\MAdecL$ and $\dbSATL$, we need only to assume that the query class $\logic$ is closed under a weak form of selection and projection. We say that a class $\logic$ of queries is \emph{closed under (weak) selection} if there is a polynomial-time computable map which, given an $n$-ary query $Q \in \logic$ and a tuple $\tup{a}$ of length $n$, constructs a query $Q' \in \logic$ such that for all instances $\bfI$, we have $Q'(\bfI) \neq \emptyset$ if and only if $\tup{a} \in Q(\bfI)$. We say that a class $\logic$ of queries is \emph{closed under (weak) projection} if there is a polynomial-time computable map which, given $Q \in \logic$, constructs a query $Q' \in \logic$ such that for all instances $\bfI$ we have $Q'(\bfI)=\top$ if and only if $Q(\bfI) \neq \emptyset$.

\begin{theorem}
\label{thm:MAdec-to-SAT}
If $\logic$ is closed under (weak) selection, then $\MAdecL \Pred \dbSATL$.
\end{theorem}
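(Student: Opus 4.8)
The plan is to exploit the fact that, for the \emph{decision} version of the problem, the particular input instance $\bfI$ is irrelevant: since an update may insert and delete arbitrary finite sets of facts, every instance is reachable from $\bfI$, so the existence of a repair depends only on $Q$ and $\tup{a}$. The reduction will therefore discard $\bfI$ and output a single query whose satisfiability is equivalent to the existence of a repair.

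First I would record the key observation: a repair with respect to $\langle Q, \bfI, \tup{a} \rangle$ exists if and only if there exists \emph{some} instance $\bfJ$ with $\tup{a} \in Q(\bfJ)$. The forward direction is immediate, taking $\bfJ = \repI$ for a witnessing repair $\Rep$. For the converse, given any instance $\bfJ$ with $\tup{a} \in Q(\bfJ)$, I would set $\Ins = \bfJ \setminus \bfI$ and $\Del = \bfI \setminus \bfJ$. Since $\bfJ$ is finite, both sets are finite, and by construction $\Ins \subseteq \Facts(\Dom) \setminus \bfI$ and $\Del \subseteq \bfI$, so $\Rep = (\Ins, \Del)$ is a legitimate update per Definition \ref{def:repairs}, with $\repI = (\bfI \cup \Ins) \setminus \Del = \bfJ$; hence $\Rep$ is a repair.

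Next I would invoke closure under weak selection. Applying the guaranteed polynomial-time map to the $n$-ary query $Q$ and the length-$n$ tuple $\tup{a}$ yields a query $Q' \in \logic$ such that $Q'(\bfJ) \neq \emptyset$ if and only if $\tup{a} \in Q(\bfJ)$, for every instance $\bfJ$. The reduction then maps $\langle Q, \bfI, \tup{a} \rangle \mapsto Q'$, ignoring $\bfI$ entirely. Chaining the equivalences, a repair exists iff some $\bfJ$ satisfies $\tup{a} \in Q(\bfJ)$ iff some $\bfJ$ satisfies $Q'(\bfJ) \neq \emptyset$ iff $Q'$ is satisfiable, i.e.\ $Q' \in \dbSATL$. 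Because weak selection is polynomial-time computable, the whole map is polynomial-time, establishing $\MAdecL \Pred \dbSATL$.

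There is no substantial obstacle here beyond isolating the right observation; the only point requiring care is verifying that the update witnessing the converse direction is well-formed, namely the finiteness of $\Ins$ and $\Del$ together with the containment constraints $\Ins \subseteq \Facts(\Dom) \setminus \bfI$ and $\Del \subseteq \bfI$. Both follow directly from the finiteness of instances and the definition of an update, so I expect the verification to be routine.
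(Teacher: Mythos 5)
Your proposal is correct and follows essentially the same route as the paper's proof: the reduction maps $\langle Q, \bfI, \tup{a}\rangle$ to the query $Q'$ supplied by weak selection, the forward direction takes $\bfJ = \repI$ as a satisfying instance, and the converse constructs the repair via $\Ins = \bfJ \setminus \bfI$ and $\Del = \bfI \setminus \bfJ$ so that $\repI = \bfJ$. Your explicit check that this update is well-formed is a fine (routine) addition that the paper leaves implicit.
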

\begin{proof}
Given an input $\langle Q, \bfI, \tup{a}\rangle$ to $\MAdecL$, let $Q'$ be the query in $\logic$ guaranteed by closure under weak selection. If $\langle Q, \bfI, \tup{a} \rangle \in \MAdecL$, then $Q'(\repI) \neq \emptyset$, and so $Q' \in \dbSATL$. Conversely, if $Q'\in \dbSATL$, then there exists an instance $\bfJ$ such that $Q'(\bfJ) = \top$, and so $\tup{a} \in Q(\bfJ)$. Then for the repair $\Rep = (\Ins, \Del)$ where $\Ins = \bfJ \setminus \bfI$ and $\Del = \bfI \setminus \bfJ$, we have that $\repI = \bfJ$. Hence $\tup{a} \in Q(\repI)$, and so we conclude that $\langle Q, \bfI, \tup{a} \rangle \in \MAdecL$.
\end{proof}

The proof of the next theorem is similar to the proof of Theorem~\ref{thm:MAdec-to-SAT}.

\begin{theorem}
\label{thm:SAT-to-MAdec}
If $\logic$ is closed under (weak) projection, then $\dbSATL \Pred \MAdecL$.
\end{theorem}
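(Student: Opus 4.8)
The plan is to mirror the construction in Theorem~\ref{thm:MAdec-to-SAT}, but in the reverse direction, using closure under weak projection to collapse satisfiability of an arbitrary query into the existence of a repair for a Boolean query starting from the empty instance. Given an input $Q \in \logic$ to $\dbSATL$, I would first apply the polynomial-time map guaranteed by closure under weak projection to obtain a query $Q' \in \logic$ such that $Q'(\bfI) = \top$ if and only if $Q(\bfI) \neq \emptyset$ for every instance $\bfI$. Note that $Q'$ is Boolean (arity $0$), so the only tuple of the correct arity is the empty tuple $\epsilon$. The reduction then sends $Q \mapsto \langle Q', \bfE, \epsilon \rangle$, where $\bfE$ is the empty instance; this map is computable in polynomial time since the weak-projection map is, and $\bfE$ and $\epsilon$ are fixed.

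It then remains to verify correctness, i.e., that $Q \in \dbSATL$ if and only if $\langle Q', \bfE, \epsilon\rangle \in \MAdecL$. For the forward direction, suppose $Q$ is satisfiable, witnessed by an instance $\bfJ$ with $Q(\bfJ) \neq \emptyset$. Then $Q'(\bfJ) = \top$, i.e., $\epsilon \in Q'(\bfJ)$. Taking the update $\Rep = (\Ins, \Del) = (\bfJ, \emptyset)$ that inserts all of $\bfJ$ into the empty instance yields $\repI = \bfJ$ (here $\bfI = \bfE$), so $\Rep$ is a repair and hence $\langle Q', \bfE, \epsilon\rangle \in \MAdecL$. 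For the converse, any repair $\Rep$ produces a repaired instance $\repI$ with $\epsilon \in Q'(\repI)$, hence $Q'(\repI) = \top$ and $Q(\repI) \neq \emptyset$, so $\repI$ witnesses satisfiability of $Q$.

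The construction is essentially immediate once weak projection is in hand; the only point worth care is the interface between the Boolean output of the projection map and the arity requirement on $\tup{a}$ in the definition of $\MAdecL$, which forces the use of the empty tuple together with the empty source instance. I do not expect a genuine obstacle beyond checking that an update from the empty instance can realize an arbitrary witness instance $\bfJ$, which it does trivially since $\Ins = \bfJ$ and $\Del = \emptyset$ is a legal update per Definition~\ref{def:repairs}.
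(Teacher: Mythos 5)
Your proposal is correct and matches the paper's own proof essentially verbatim: the same reduction $Q \mapsto \langle Q', \bfE, \epsilon \rangle$ via the weak-projection map, the same forward direction using the update $(\bfJ, \emptyset)$, and the same converse using the repaired instance as a satisfiability witness. No gaps.
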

\begin{proof}
Given an input $Q \in \logic$, let $\langle Q', \bfE, \epsilon \rangle$ be the $\MAdecL$ input where $Q'$ is the query in $\logic$ guaranteed by closure under weak projection, $\bfE$ is the empty instance, and $\epsilon$ is the empty tuple. If $Q \in \dbSATL$, then $Q(\bfJ) \neq \emptyset$ for some $\bfJ$, and so $Q'(\bfJ) = \top$. Hence the repair $\Rep = (\Ins,\Del)$ with $\Ins = \bfJ$ and $\Del = \emptyset$ yields a repaired instance $\repI = \bfJ$. Thus $\langle Q', \bfE, \epsilon \rangle \in \MAdecL$. Conversely, if $\langle Q', \bfE, \epsilon \rangle \in \MAdecL$, then there exists a repair $\Rep$ such that $Q'(\bfE_\Rep) = \top$. Hence $Q(\bfE_\Rep) \neq \emptyset$, and so $Q \in \dbSATL$.
\end{proof}

All query classes mentioned in Section \ref{sec:prelim} satisfy the weak selection and projection properties, and the maps are in fact computable in time linear in the size of the query.

\begin{corollary}
\label{cor:MAdec}
The following statements hold.
\begin{enumerate}
\item $\MAdec{\UCQneg}$ is decidable in $\P$;
\item $\MAdec{\DL}$ and $\MAdec{\spDL}$ are $\EXP$-complete;
\item $\MAdec{\RA}$ is undecidable.
\end{enumerate}
\end{corollary}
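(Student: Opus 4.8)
The plan is to obtain all three claims uniformly from the mutual reducibility of $\MAdecL$ and $\dbSATL$ established in Theorems~\ref{thm:MAdec-to-SAT} and~\ref{thm:SAT-to-MAdec}, applied to the (separately determined) complexity of satisfiability for each class. First I would invoke the remark immediately preceding the corollary: each of $\UCQneg$, $\DL$, $\spDL$, and $\RA$ is closed under both weak selection and weak projection, with the required query transformations computable in linear time. Hence for every $\logic$ among these classes we get $\MAdecL \Pred \dbSATL$ and $\dbSATL \Pred \MAdecL$, so the two problems coincide (up to $\Pred$) and it suffices to classify $\dbSATL$.

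Two of the three cases are then light. For claim (1), I would argue $\dbSAT{\UCQneg} \in \P$ directly: a $\UCQneg$ query is satisfiable iff one of its disjuncts is, and a single disjunct (positive and negated extensional atoms plus inequalities) is satisfiable iff, after computing the term equivalence classes induced by its equality atoms via union--find, there is no constant clash inside a class, no inequality collapsing to a reflexive $t \neq t$, and no positive/negated atom pair $R(\tup t)$, $\lnot R(\tup s)$ whose argument tuples are forced equal; if these checks pass, assigning pairwise-distinct fresh constants to the remaining free classes yields a witness. All of this is polynomial, giving $\MAdec{\UCQneg} \in \P$. For claim (3), the undecidability of $\dbSAT{\RA}$ is exactly Trakhtenbrot's theorem, already recorded in Section~\ref{sec:prelim} (satisfiability is undecidable for the relational calculus, equivalently $\RA$)~\cite{Trakhtenbrot50}; transporting it across $\dbSAT{\RA} \Pred \MAdec{\RA}$ yields undecidability of $\MAdec{\RA}$.

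Claim (2) carries the real work: I must show $\dbSAT{\DL}$ and $\dbSAT{\spDL}$ are $\EXP$-complete. For hardness it suffices to treat $\DL$, since $\DL \subseteq \spDL$ gives $\dbSAT{\DL} \Pred \dbSAT{\spDL}$ for free; here I would use that datalog program complexity is $\EXP$-complete~\cite{Vardi82,Immerman86} and observe that the standard encoding can be made to read no extensional relation at all, seeding its recursion only with equality atoms and constants, so that its answer is instance-independent and nonemptiness coincides with satisfiability. For the upper bound on plain $\DL$, homomorphism-closure (genericity fixing $\adom(P)$) lets me collapse any witnessing instance onto $\adom(P)$ together with one fresh constant; the complete extensional instance over this domain is polynomial in size because the schema $\bfR$ is fixed, and a single evaluation of $P$ on it decides satisfiability in $\EXP$. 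The hard part will be the upper bound for $\spDL$: negated extensional atoms and inequalities break homomorphism-closure, and the naive ``complement-predicate'' rewriting $\lnot R \mapsto R^-$ is unsound (the unsatisfiable body $R(x),\lnot R(x)$ rewrites to the satisfiable $R(x),R^-(x)$). My plan instead is to bound the domain of a minimal witnessing instance polynomially by folding constants together subject only to the distinctness constraints contributed by inequality atoms and by positive/negated atom pairs, exploiting the key structural fact that in $\spDL$ negation is confined to \emph{non-recursive} extensional predicates, so the program cannot use recursion to accumulate distinctness beyond the boundedly-many terms co-occurring in a rule body and hence cannot force a super-polynomial domain. Once a polynomial domain is secured, there are only exponentially-many extensional instances over it, each evaluable as a monotone program in $\EXP$, so brute-forcing all of them keeps $\dbSAT{\spDL}$ in $\EXP$; making this domain bound precise is the main obstacle, and it is exactly the step where the restriction to semi-positive (rather than stratified) negation is essential. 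Combining the bounds with the reductions above gives $\EXP$-completeness of $\MAdec{\DL}$ and $\MAdec{\spDL}$.
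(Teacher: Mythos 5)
Your overall skeleton is exactly the paper's: invoke Theorems~\ref{thm:MAdec-to-SAT} and~\ref{thm:SAT-to-MAdec} to identify $\MAdecL$ with $\dbSATL$ for each class, then classify satisfiability. The paper then simply cites the known complexity of $\dbSATL$ for each class (\cite{abiteboul1995foundations, Shmueli93, levy1993equivalence, Trakhtenbrot50}), proving only the folklore $\EXP$-hardness of $\dbSAT{\DL}$ in the appendix. Your from-scratch arguments for claims (1) and (3) are correct, and your hardness idea for (2) is essentially the appendix construction; one caveat there is that under the paper's conventions a program with no EDB atoms is not considered safe (``otherwise, no safe programs can be written''), which is precisely why the appendix retains an EDB predicate $U$ and forces $U(a,\hdots,a), U(b,\hdots,b)$ into any satisfying instance rather than grounding variables with equality atoms alone.

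The genuine gap is your upper bound for $\dbSAT{\spDL}$. Your argument rests on the claim that a satisfiable $\spDL$ program has a witnessing instance over a \emph{polynomially} bounded domain, justified by the assertion that semi-positive negation ``cannot accumulate distinctness'' beyond single rule bodies. This is unsubstantiated: a positive occurrence $R(\tup{x})$ and a negated occurrence $\lnot R(\tup{y})$ of the same EDB predicate in \emph{distant} parts of an unfolding impose a non-local disequality on their argument tuples, so distinctness constraints are not confined to the boundedly-many terms co-occurring in one rule body, and your treewidth-style folding argument does not address this interaction. Indeed, the paper itself (Theorem~\ref{thm:DL-upper-bound}) only extracts from \cite{levy1993equivalence} a satisfying instance of \emph{exponential} size over an exponentially large set of fresh constants --- which is exactly why it places $\MAmin{\spDL}$ in $\FtwoEXP$ rather than $\FEXP$ and leaves the precise complexity of $\MAbound{\spDL}$ open. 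If the witness domain really is exponential, your brute-force enumeration of all extensional instances over it runs in doubly-exponential time, so your plan yields only $\dbSAT{\spDL} \in \twoEXP$, not $\EXP$. The repair is to do what the paper does: cite the $\EXP$ satisfiability procedure of \cite{levy1993equivalence} directly (it is not a brute-force search over instances), rather than attempting to rederive a small-model property that is the hard content of that work.
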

\begin{proof}
By Theorems \ref{thm:MAdec-to-SAT} and \ref{thm:SAT-to-MAdec}, together with known results on the complexity of the satisfiability problems for these classes \cite{abiteboul1995foundations, Shmueli93, levy1993equivalence, Trakhtenbrot50}. Note that, to the author's knowledge, the $\EXP$-hardness of $\dbSAT{\DL}$ is an unpublished folklore result, and so we include a proof of this lower bound in the appendix.
\end{proof}

Since $\MAdecL$ can be solved with an oracle for $\MAsizeL$ or $\MAminL$, Corollary \ref{cor:MAdec} implies $\MAsize{\DL}$, $\MAsize{\spDL}$, $\MAmin{\DL}$, and $\MAmin{\spDL}$ are $\EXP$-hard, while $\MAsize{\RA}$ and $\MAmin{\RA}$ are uncomputable.

\subparagraph*{The (bounded) missing answer repair problem.} We now turn to the complexity of the $\MAboundL$ and $\MAminL$ problems for sub-classes of $\UCQneg$. For the polynomial-time cases of $\MAboundL$, it clearly suffices to show that $\MAminL$ can be solved in polynomial time.

\begin{theorem}
\label{thm:combined-poly-classes}
The $\MAmin{\UCQsjf}$, $\MAmin{\SJUneg}$, and $\MAmin{\SPUneg}$ problems are in $\FP$.
\end{theorem}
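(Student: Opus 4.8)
The plan is to handle all three classes through a single template, specialising only the search over satisfying assignments. Since a $\UCQneg$ query $Q$ is a union of conjunctive queries (with negated extensional atoms and (in)equalities), we have $\tup{a} \in Q(\repI)$ if and only if $\repI$ satisfies some disjunct $Q_i$ via an assignment $g$ with $g$ mapping the head to $\tup{a}$. Consequently the minimum repair for $Q$ equals the minimum, over the polynomially many disjuncts, of the minimum repair making that single disjunct true on $\tup{a}$: any repair for $Q$ witnesses some disjunct (so its size is at least that disjunct's optimum), and conversely a repair for one disjunct is already a repair for $Q$. Thus it suffices to solve the per-disjunct problem in polynomial time while tracking the witnessing assignment, so that we can emit an explicit update $(\Ins,\Del)$ as $\MAmin{}$ requires, not merely its cardinality.

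For a fixed disjunct and a fixed assignment $g$, the cheapest repair realising $g$ is forced: every positive atom $\alpha$ with $g(\alpha)\notin\bfI$ must be inserted, every negative atom $\lnot\alpha$ with $g(\alpha)\in\bfI$ must be deleted, and nothing else is needed or helps. Hence the cost of $g$ is exactly the number of such insertions plus deletions, and by genericity (cf.\ the discussion preceding Proposition~\ref{prop:general-combined-UB}) we may assume $g$ ranges over $\tup{a}\cup\adom(Q,\bfI)\cup C$, where $C$ is a set of fresh constants of size at most the number of bound variables. The three classes differ only in how this search is carried out. In the \emph{join-free} case ($\SPUneg$) each disjunct body is a single extensional atom $R(\tup{t})$, which safety forces to be positive; a cost-$0$ repair exists iff $\bfI$ already contains a matching fact (checked by a scan), and otherwise a single insertion (cost $1$) both suffices and is necessary whenever the head is consistent with $\tup{a}$. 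In the \emph{projection-free} case ($\SJUneg$), $\bound(r)=\emptyset$ means that fixing the head to $\tup{a}$ determines $g$ on all variables, so there is at most one candidate assignment per disjunct; we check it for the single possible obstruction—a positive and a negative atom collapsing to the same fact—and otherwise read off the forced insertions and deletions.

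The \emph{self-join-free} case ($\UCQsjf$) is where the fixed-schema assumption does the real work. Because no relation of $\bfR$ may occur twice in a rule body and $\bfR$ is fixed, each disjunct has at most $\card{\bfR}$ relational atoms, and since safety places every variable in a positive literal, it has only a constant number of variables. The number of candidate assignments over $\tup{a}\cup\adom(Q,\bfI)\cup C$ is therefore polynomial, so we simply enumerate them and compute each cost directly. Self-join-freeness additionally secures consistency for free: positive and negative atoms lie over distinct relations, so the forced insertions and deletions never clash and every enumerated $g$ yields a legitimate repair.

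I expect the main obstacle to be conceptual rather than computational: recognising that, absent the fixed schema, the self-join-free case would be intractable. With an unbounded schema one can give each atom its own relation and encode, for instance, graph $k$-colourability as the existence of a cost-$0$ repair, so that $\MAmin{}$ becomes $\NP$-hard; the whole argument must therefore lean on the fixed schema to bound the number of atoms—and hence variables—per disjunct by a constant, collapsing the otherwise exponential assignment search to polynomial size. The remaining work is bookkeeping: verifying that the ``insert-missing-positives, delete-present-negatives'' repair is genuinely minimal for each assignment (so that minimising over the enumerated assignments and disjuncts gives the global optimum, invoking genericity to cover assignments using constants outside $\adom(Q,\bfI)$), and reconstructing the witnessing update $(\Ins,\Del)$ to return an actual repair.
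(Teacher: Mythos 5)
Your proof is correct and follows essentially the same route as the paper's: reduce to the best single disjunct, exploit the fixed schema to bound the number of atoms (hence variables) in the self-join-free case so that a brute-force search over a polynomial domain suffices, read off the assignment forced by the head in the projection-free case, and handle the one-atom body directly in the join-free case. One small slip: safety does not force the single atom of a join-free rule to be positive, since a variable-free body such as $\lnot R(a,b)$ is permitted, and the paper's proof explicitly covers this with a single deletion; your general ``insert missing positives, delete present negatives'' template already handles it, so nothing breaks.
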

\begin{proof}
To find a minimal repair for a query $Q \in \UCQ$ which is the union of conjunctive queries $Q_1,\hdots,Q_m$, it suffices to find minimal repairs $\Rep_i$ for each $Q_i$ and return the smallest of these. Thus we need only to show that minimal repairs for $\CQsjf$, $\SJneg$, and $\SPneg$ queries can be found in polynomial time. For the first, observe that for a self-join-free conjunctive query, the body contains at most $\card{\Sigma}$ literals. Since the schema is fixed, we can form a repair for a triple $\langle Q, \bfI, \tup{a} \rangle$ with only a constant number of insertions or deletions over the domain $\tup{a} \cup \adom(Q,\bfI) \cup \{ c \}$, where $c$ is a fresh constant. Furthermore, the $\Eval{\CQsjf}$ is in polynomial-time, and so we can solve $\MAmin{\CQsjf}$ in polynomial-time with a brute force search of all possible repairs.

Now suppose we are given an input $\langle Q, \bfI, \tup{a} \rangle$ to $\MAmin{\SJneg}$. Recall that $Q$ is defined by a rule $r$ of the form $S(\tup{x}) \df \alpha_1, \hdots, \alpha_{n}$, where each $\alpha_i$ a positive or negative literal and $\{ \tup{x} \} = \bigcup_{i \leq n} \var(\alpha_i)$. Thus $\Body_r(\tup{a})$ is an expression containing no variables, and so for any instance $\bfJ$, we have that $\tup{a} \in Q(\bfJ)$ if and only if $\alpha_i[g] \in \bfJ$ for each $i \leq n$, where $g$ is the assignment with $g(\tup{x}) = \tup{a}$. Hence the minimal repair is given by $\Rep = (\Ins, \Del)$ where
\begin{align*}
\Ins &= \{ R(\tup{a}) \mid \alpha_i = R(\tup{x}) ~\text{and}~ R(\tup{a}) \not \in \bfI ~\text{for some}~ i \leq n \} \\
\Del &= \{ R(\tup{a}) \mid \alpha_i = \lnot R(\tup{x}) ~\text{and}~ R(\tup{a}) \in \bfI ~\text{for some}~ i \leq n \},
\end{align*}
which can be easily computed in polynomial-time.

Now suppose $\langle Q, \bfI, \tup{a} \rangle$ is an input to $\MAmin{\SPneg}$. Recall that $Q$ is defined by a rule of the form $S'(\tup{x}) \df \beta$, where $\beta$ is a positive or negative literal which may contain variables not occurring in the tuple $\tup{x}$. Clearly, either $\tup{a} \in Q(\bfI)$, or the minimal repair contains exactly one insertion (if $\beta$ is a positive literal) or exactly one deletion (if $\beta$ is a negative literal). In either case, it is straightforward to iterate through all facts in the appropriate relation of the database to determine which case is required, and to compute a minimal repair. Since this is clearly a polynomial-time procedure, we're done.
\end{proof}

\begin{theorem}
\label{thm:combined-UCQneg-upper-bounds}
$\MAsize{\UCQneg}$ is in $\OptP[O(\log(n))]$ and $\MAmin{\UCQneg}$ is in $\FP^\NP$.
\end{theorem}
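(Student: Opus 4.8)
The plan is to build both algorithms on top of two facts already available for $\UCQneg$: that \Eval{\UCQneg} is in $\NP$, and that $\UCQneg$ admits polynomially-bounded repairs, witnessed by the domain $S = \tup{a} \cup \adom(Q,\bfI) \cup C$ with $C$ a set of $m$ fresh constants, where $m$ bounds the number of variables occurring in any disjunct of $Q$ (see the discussion following Proposition \ref{prop:general-combined-UB}). Since the schema is fixed and of bounded arity, the number of facts over $S$ is polynomial in the input size $n$; hence there are only polynomially many \emph{candidate facts} $f_1,\dots,f_N$ (each usable as an insertion or a deletion), every repair over $S$ is a subset of these, and any such repair has size at most a fixed polynomial $P(n)$. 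The first thing I would establish is a strengthening of Definition \ref{def:bounded-repairs} for $\UCQneg$: not merely that some repair over $S$ exists whenever a repair exists, but that a \emph{minimum-size} repair can always be taken over $S$. This follows from genericity (Definition \ref{def:generic-queries}): any repair realized through a satisfying assignment $g$ of one disjunct can be transformed, by injectively renaming the constants that $g$ uses outside $\adom(Q,\bfI) \cup \tup{a}$ into $C$, into a repair over $S$ of the same size, because facts built from fresh constants are never present in $\bfI$ — so positive literals still require one insertion each and negative literals require no deletion. Consequently the global optimal repair size equals the optimum over $S$ and is bounded by $P(n)$.

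For \MAsize{\UCQneg}, I would describe an $\NP$ metric Turing machine $M$ that, on input $\langle Q,\bfI,\tup{a}\rangle$, nondeterministically guesses a subset $\Rep$ of the candidate facts together with a satisfying assignment witnessing $\tup{a}\in Q(\repI)$, then accepts and writes $\card{\Rep}$ in binary if the witness is valid and rejects otherwise. I would also add one distinguished branch that always accepts and writes the sentinel value $P(n)+1$, so that $\opt^M$ is total: when a repair exists it equals the minimum repair size (which is at most $P(n)$ by the previous paragraph), and when no repair exists it equals $P(n)+1$. Because \Eval{\UCQneg} is in $\NP$ and there are polynomially many candidate facts, $M$ is a polynomially-time-bounded nondeterministic machine, and since every value it writes is at most $P(n)+1$, the binary encoding of $\opt^M$ is $O(\log(n))$. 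This places \MAsize{\UCQneg} in $\OptP[O(\log(n))]$, with the convention that the output $P(n)+1$ is read as ``no repair exists''.

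For \MAmin{\UCQneg}, I would use two rounds of $\NP$ oracle calls. First, compute the optimal size $k^\ast$ by binary search over $\{0,\dots,P(n)\}$ using \MAbound{\UCQneg}, which is in $\NP$ by Proposition \ref{prop:general-combined-UB}; this takes $O(\log(n))$ queries. Second, construct an explicit optimal repair by a standard self-reduction over the candidate facts: maintaining disjoint sets $\mathrm{In}$ and $\mathrm{Out}$ of committed facts, I would process $f_1,\dots,f_N$ in turn, asking the oracle whether there is a valid repair of size at most $k^\ast$ that contains $\mathrm{In}$, avoids $\mathrm{Out}$, and \emph{excludes} $f_i$; if so I add $f_i$ to $\mathrm{Out}$, otherwise to $\mathrm{In}$. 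The constrained question ``is there a size-at-most-$k$ repair respecting these commitments?'' is again in $\NP$ (guess the remaining facts and an evaluation witness), so this round uses polynomially many queries, and the standard invariant — that an optimal repair consistent with $(\mathrm{In},\mathrm{Out})$ always survives — guarantees that the final set $\mathrm{In}$ is a minimum-cardinality repair. The whole procedure runs in polynomial time with an $\NP$ oracle, so \MAmin{\UCQneg} is in $\FP^\NP$.

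The step I expect to require the most care is not the oracle machinery, which is routine, but the reduction of the search space to $S$ while preserving \emph{optimality}: Definition \ref{def:bounded-repairs} only promises preservation of existence, whereas both results need that no repair using constants outside $S$ can be strictly smaller than the best repair over $S$. I would make the folding argument of the first paragraph fully rigorous, checking in particular the edge cases in which $g$ identifies several variables or reuses constants of $\adom(Q,\bfI)$, and verifying that the renaming into $C$ can never turn a required deletion into a non-deletion or increase the insertion count. This single lemma simultaneously justifies the polynomial value bound $P(n)$ underpinning the $\OptP[O(\log(n))]$ claim and the soundness of the constrained $\NP$ oracle used for the $\FP^\NP$ construction.
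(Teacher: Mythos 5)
Your proposal is correct, and the first half (the $\NP$ metric Turing machine for \MAsize{\UCQneg}) is essentially the paper's argument, down to guessing an update together with a satisfying assignment and writing $\card{\Rep}$ in binary; your sentinel branch for the ``no repair'' case and your explicit observation that the written values are polynomially bounded are harmless additions. The second half takes a genuinely different route. The paper first binary-searches for the optimal size $K$ (as you do) but then reconstructs a repair by extending a \emph{partial satisfying assignment} one variable at a time, asking for each variable $x$ and each candidate value $b$ whether an optimal repair exists witnessed by an assignment extending the current commitment by $\langle x,b\rangle$; this uses $O(n^2)$ oracle calls and is what justifies the sharper $\FP^{\NP[n^2]}$ bound advertised in Table~\ref{table:combined}. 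Your fact-by-fact self-reduction over the candidate facts $f_1,\dots,f_N$ is equally sound (the In/Out invariant you state does hold, and at termination $\mathrm{In}$ is forced to equal the surviving optimal repair), but $N$ is $O(n^{r})$ for $r$ the maximum arity in the fixed schema, so it yields only a polynomial, not necessarily quadratic, query count --- enough for the stated $\FP^\NP$ claim, weaker than what the paper actually extracts. One genuine merit of your write-up is the folding lemma: the paper's notion of polynomially-bounded repairs only guarantees preservation of \emph{existence} over the restricted domain $S$, whereas both algorithms need preservation of the \emph{optimum}; your genericity-based renaming argument (checking that insertions over fresh constants remain insertions and that no new deletions are created) makes explicit a step the paper leaves implicit.
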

\begin{proof}
We know that $\UCQneg$ has polynomially-bounded repairs (see Section \ref{sec:ma}). Hence we can solve $\MAsize{\UCQ}$ with an $\NP$ metric Turing machine as follows. Given an input $\langle Q, \bfI, \tup{a} \rangle$, on each branch of the computation, we non-deterministically generate a possible update, non-deterministically generate a possible satisfying assignment, and then check if the satisfying assignment witnesses that the update is a repair with respect to $\langle Q, \bfI, \tup{a} \rangle$. If yes, then we write down the size of the repair in binary and accept. Otherwise, we reject on that branch. This algorithm is clearly correct, and each branch runs in polynomial time.

For membership of $\MAmin{\UCQneg}$ in $\FP^{\NP[O(n^2)]}$, let $\langle Q, \bfI, \tup{a} \rangle$ be an input with $Q$ in $\UCQneg$. We first determine the size $K$ of the optimal repair (if it exists) by a binary search, using $\NP$ queries to $\MAbound{\UCQneg}$, on the interval $[0,m]$, where $m$ is the number of literals in the rule of $Q$ with the most literals. Then, we construct an optimal repair as follows. First, we let $C$ denote a set of constants distinct from $\tup{a} \cup \adom(\bfI)$ equal to the number of variables occurring in $Q$, and we initialize a partial assignment $g = \{ \langle x_i,a_i \rangle \mid i \leq \len(\tup{a}) \}$. For each variable $x$ in $Q$ and each element $b$ in $\tup{a} \cup \adom(D) \cup C$, we query the oracle for an answer to ``does there exist a repair with respect to $\langle Q, \bfI, \tup{a} \rangle$ of size $K$ where the fact that $\tup{a} \in Q(\bfI)$ is witnessed by an assignment extending $g \cup \{\langle x,b \rangle \}$?'' It is straightforward to see that this query is in $\NP$. Furthermore, we need only ask $O(m(\card{\tup{a}} + \card{\adom(\bfI)}+m))$ queries to this oracle, which is quadratic in the size of the input. Hence $\MAmin{\UCQneg}$ is in $\FP^{\NP[O(n^2)]}$.
\end{proof}

We now provide $\OptP[\log(n)]$ and $\FP^{\NP[\log(n)]}$ lower bounds for $\MAbound{\PJ}$ and $\MAmin{\PJ}$, respectively, by providing a strict and metric reduction to the minimum set cover\footnote{The $\minSC$ problem is, given a collection $X = \{ s_1,\hdots,s_m \}$ of sets, where $U = \bigcup_{i \leq m} s_i$, to find a set $C \subseteq X$ such that for each $u \in U$, there exists $s \in C$ such that $u \in s$.} ($\minSC$) problem. See Section \ref{sec:prelim} for the definitions of strict and metric reductions. Note that we view $\MAminL$ as an approximation algorithm by taking repairs $\Rep$ with respect to $\langle Q, \bfI, \tup{a} \rangle$ as \emph{feasible solutions} for \emph{instances} $\langle Q, \bfI, \tup{a} \rangle$, and the size $\card{\Rep}$ of the repair as its \emph{cost}. It is known that if there exists an approximation algorithm for $\minSC$ with approximation ratio $(1 - \alpha) \ln(n)$ for any $\alpha > 0$, where $n$ is the size of the instance, then $\P = \NP$ \cite{DinurS14}.

\begin{theorem}
\label{thm:PJ-hard}
If the (EDB) schema $\bfR$ contains a unary and a binary relation symbol, then
\begin{enumerate}
\item the $\MAbound{\PJ}$ problem is $\NP$-complete, and
\item the $\MAsize{\PJ}$ problem is $\OptP[\log(n)]$-complete.
\item the $\MAmin{\PJ}$ problem is $\FP^{\NP[\log(n)]}$-hard.
\item the $\MAmin{\PJ}$ problem is $\minSC$-hard to approximate under strict reductions.
\end{enumerate}
\end{theorem}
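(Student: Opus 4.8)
The plan is to prove all four parts from a single polynomial-time construction $f$ reducing minimum set cover to the missing answer problem for $\PJ$, paired with a polynomial-time recovery map $h$ sending repairs back to covers, so that $(f,h)$ is at once a strict reduction and an exact metric reduction. Given a $\minSC$ instance with sets $X=\{s_1,\dots,s_m\}$ over universe $U=\{u_1,\dots,u_n\}$, and using the promised unary symbol $A$ and binary symbol $E$, I would set
\[
Q(y_1,\dots,y_n)\df A(x_1),E(x_1,y_1),\dots,A(x_n),E(x_n,y_n),
\]
take $\bfI=\{E(s_j,u_i)\mid u_i\in s_j\}$ (with the $A$-relation empty), and ask for the tuple $\tup{a}=(u_1,\dots,u_n)$. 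This $Q$ lies in $\PJ$ --- every atom has distinct variables and no constants, while the repeated $x_i$ supplies the join and binding the $x_i$ supplies the projection --- and the design is that inserting $A(s_j)$ ``selects'' set $s_j$, with $\tup{a}\in Q(\repI)$ holding exactly when every $u_i$ is covered by a selected set.

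The crux is the identity (minimum repair size) $=$ (minimum cover size). One direction is immediate: a cover $C$ induces the repair inserting $\{A(s_j)\mid s_j\in C\}$, witnessed by the assignment $y_i\mapsto u_i$ together with any selected set containing $u_i$, and this uses only original $E$-facts, so it has size $\card{C}$. For the converse I would analyze an arbitrary repair; since $\PJ$ is monotone I may assume it makes no deletions, inserting a set $\Ins_A$ of $A$-facts and a set $\Ins_E$ of $E$-facts. As $A$ is initially empty, the selected values are exactly the $\card{\Ins_A}$ first coordinates of $\Ins_A$. I then form a cover $C$ by taking those selected values that are original sets, and, for every element covered in $\repI$ only through an inserted $E$-fact, replacing it by an arbitrary original set containing it (one exists because $U=\bigcup_i s_i$). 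Charging each replacement to a distinct fact of $\Ins_E$ bounds $\card{C}$ by $\card{\Ins_A}+\card{\Ins_E}=\card{\Rep}$; this $C$ is exactly $h(x,\Rep)$, and specializing to an optimal repair gives the claimed identity.

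Everything then follows by reading $(f,h)$ in the right mode. Condition~(2) of a strict reduction is precisely the bound $\card{h(x,\Rep)}\le\card{\Rep}$ just shown, and condition~(1) holds because equality of optima forces $h$ to send an optimal repair to a cover of optimal size; as strict reductions preserve approximation ratios and $\minSC$ admits no $(1-\alpha)\ln n$ approximation unless $\P=\NP$~\cite{DinurS14}, this is part~(4). Reading the construction at a fixed budget $k$ shows a cover of size $\le k$ exists iff a repair of size $\le k$ exists, so $\MAbound{\PJ}$ is $\NP$-hard, while membership in $\NP$ is Proposition~\ref{prop:general-combined-UB} via $\PJ\subseteq\UCQneg$; this is part~(1). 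Since the optima coincide, $(f,g)$ with $g(x,k)=k$ is an exact metric reduction from the ($\OptP[\log(n)]$-hard) problem of computing minimum cover size, and the matching upper bound $\MAsize{\UCQneg}\in\OptP[\log(n)]$ from Theorem~\ref{thm:combined-UCQneg-upper-bounds} gives part~(2); the same strict reduction is in particular a reduction between the associated search problems, transferring the $\FP^{\NP[\log(n)]}$-hardness of constructing a minimum cover to part~(3).

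I expect the main obstacle to be verifying the strict-reduction inequality $\card{h(x,\Rep)}\le\card{\Rep}$ for \emph{every} feasible repair rather than only optimal ones. A repair may cheat by inserting $E$-facts or invoking fresh constants to fabricate sets, so $h$ must round any such repair down to a genuine cover without increasing its cost; the charging argument --- one fresh original set per element covered only via an inserted $E$-fact, paid for by a distinct member of $\Ins_E$ --- is exactly what secures this, and it relies essentially on feasibility of the $\minSC$ instance, i.e.\ on every element belonging to some original set.
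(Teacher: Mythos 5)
Your construction is, up to renaming of symbols and variables, exactly the paper's reduction: the same selection-free query joining a unary ``chooser'' relation with a binary membership relation, the same source instance encoding set membership, and the same charging/rounding argument showing $\card{h(x,\Rep)}\le\card{\Rep}$ for arbitrary repairs, after which the four parts follow from the identical citations (Proposition~\ref{prop:general-combined-UB}, Theorem~\ref{thm:combined-UCQneg-upper-bounds}, \cite{DinurS14}, and \cite{Krentel88}). The proposal is correct and takes essentially the same approach as the paper.
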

\begin{proof}
The $\NP$-hardness result follows from $\NP$-hardness of $\Eval{\PJ}$ and the reduction given in Proposition \ref{prop:eval-to-rep}. For the approximation hardness result, we provide a strict reduction $(f,h)$ from the $\minSC$ problem to the $\MAmin{\PJ}$ problem. This reduction is also a metric reduction from the $\minSC$ problem to the $\MAmin{\PJ}$ problem, from which it follows that $\MAsize{\PJ}$ problem is $\OptP[\log(n)]$-hard. Furthermore, the $\FP^{\NP[\log(n)]}$-hardness of $\MAmin{\PJ}$ follows from the $\OptP[\log(n)]$-hardness of $\MAsize{\PJ}$ by~\cite[Theorem 3.3]{Krentel88}.

Let $X = \{ s_1, \hdots, s_m \}$ with $U = \bigcup_{i \leq m} s_i = \{ u_1, \hdots, u_n \}$ denote an arbitrary input to $\minSC$. Let $P$ be a unary relation symbol in $\bfR$ and $F$ be a binary relation symbol in $\bfR$, and fix distinct constants $\tup{a} = a_1,\hdots,a_n$ and $\tup{b} = b_1,\hdots,b_m$. We now define an input $f(X) = \langle Q^X, \bfI^X, a_1,\hdots,a_n \rangle$ to $\MAmin{\PJ}$ as follows. Let $Q^X$ denote the $\PJ$ query defined by the rule
\[
\ans(x_1,\hdots,x_n) \df F(y_1,x_1), \hdots, F(y_n,x_n), P(y_1), \hdots, P(y_n),
\]
where $x_1,\hdots,x_n,y_1,\hdots,y_n$ are pairwise distinct variables. Set $\bfI^X = \{ F(b_j,a_i) \mid u_i \in s_j \}$. The idea is that the $a_i$ constants represent elements of $U$, the $b_i$ constants represent elements of $S$, and a fact $F(a,b)$ asserts that $a \in b$. The unary $P$ predicate is used to ``choose'' which sets to include in the set cover. We now make this intuition precise.

We claim that $\MAmin{\PJ}(f(X)) = k$. To see that $\MAmin{\PJ}(f(X)) \leq k$, let $C$ be a set cover of $U$ of cardinality $k$. Then for all $i \leq n$, we have by construction of $\bfI^X$ that there exists some $s_j \in C$ such that $F(b_j,a_i) \in \bfI^X$. It follows easily that $\tup{a} \in Q(\bfI^X_\Rep)$ for the repair $\Rep = (\{ P(b_j) \mid s_j \in C \}, \emptyset)$. Hence $\MAmin{\PJ}(f(X)) \leq k$. To show that $\MAmin{\PJ}(f(X)) \geq k$, we use the following claim.

\begin{claim}
\label{claim:SC-reduction-claim}
If $\Rep = (\Ins,\Del)$ is a repair of least cardinality such that $\tup{a} \in Q'(\bfI^X_\Rep)$, then $\Del = \emptyset$. Furthermore, if $\Ins$ contains a fact $F(c,d)$ for some constants $c,d \in \Dom$, then there exists some $s_i \in S$ such that $\tup{a} \in Q(\bfI^X_{\Rep'})$, where $\Rep' = (\Ins',\Del)$ with $\Ins' = (\Ins \setminus \{ F(c,d) \}) \cup \{ P(b_i) \}$.
\end{claim}
\begin{proof}[Proof of claim.]
Let $\Rep = (\Ins,\Del)$ be as in the statement of the claim. That $\Del = \emptyset$ follows immediately from the monotonicity of $Q$. Since $\tup{a} \in Q(\bfI^X_\Rep)$, we must have that for each $i \leq n$, there exists $e \in \Dom$ such that $F(e,a_i), P(e) \in \bfI^X$. Now observe that $d = a_i$ for some $i \leq n$, because otherwise we have that $\tup{a} \in Q(\bfI^X_{\Rep''})$ where $\Rep'' = (\Ins \setminus \{ F(c,d) \}, \emptyset)$, contradicting the minimality of $\Rep$. Let $s_j \in S$ be an arbitrary set such that $u_i \in s_j$ (which must exist, since $\bigcup S = U$, by assumption) and set $\Rep' = (\Rep \setminus \{ F(c,a_i) \}) \cup \{ P(b_j) \}$. Since $\Del = \emptyset$, we have that $F(b_j,a_i), P(b_j) \in \bfI^x_{\Rep'}$, and so $\tup{a} \in Q(\bfI^x_{\Rep'})$, which is what we wanted to show.
\end{proof}

By Claim \ref{claim:SC-reduction-claim}, there is a repair $\Rep = (\Ins, \Del)$ of minimal cardinality such that $\tup{a} \in Q(\bfI^x_{\Rep})$, where $\Del = \emptyset$ and $\Ins$ contains only facts of the form $P(b_i)$ for sets $s_i \in S$. It follows that $C = \{ s_i \mid P(b_i) \in \Ins \}$ is a set cover of $U$, and so $\MAmin{\PJ}(f(X)) \geq k$.

For the map $h$, given an input $X$ to $\minSC$ and some repair $\Rep$ such that $a_1,\hdots,a_n \in Q^X(\bfI^X_{\Rep})$, we mimic the proof of Claim \ref{claim:SC-reduction-claim} to find a set cover of $U$ of better or equal cost. First, we discard all deletions and all insertions of facts of the form $F(c,d)$ where $d \neq a_i$ for some $i \leq n$. Then, we replace all insertions of $F$ facts of the form $F(c,a_i)$ for some $i \leq n$ with an arbitrary $P(b_j)$ fact where $u_i \in s_j$. The output of $h$ is the collection of sets corresponding to the remaining $P$ facts, which is clearly a set cover of $U$ of less or equal cost to $\Rep$. Note that the pair $(f,h)$ is both a strict reduction (see Definition \ref{def:strict-reduction}) and a metric reduction (see Definition \ref{def:metric-reduction}).
\end{proof}

We conclude this section with an upper bound for $\MAmin{\DL}$.

\begin{theorem}
\label{thm:DL-upper-bound}
$\MAmin{\DL}$ is in $\FEXP$ and $\MAmin{\spDL}$ is in $\FtwoEXP$.
\end{theorem}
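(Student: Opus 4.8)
The plan is to establish the $\FEXP$ upper bound for $\MAmin{\DL}$ (and then adapt it to $\FtwoEXP$ for $\MAmin{\spDL}$) by first bounding the search space of candidate repairs to exponential size, and then searching that space deterministically in exponential time. The key observation is that $\DL$ queries are generic and monotone; genericity lets us restrict attention to repairs using only constants from $\tup{a} \cup \adom(Q,\bfI)$ together with a bounded number of fresh constants, while monotonicity (for pure $\DL$, which lacks negation) ensures that minimal repairs consist only of insertions, so $\Del = \emptyset$. The first step is therefore to argue that $\DL$ admits computable bounded repairs in the sense of Definition~\ref{def:bounded-repairs}, with a bounding set $S = \boundL(Q,\bfI,\tup{a})$ of size polynomial in the input: intuitively, any satisfying derivation of $\tup{a}$ in $Q(\repI)$ uses only polynomially-many distinct constants (bounded by the number of variables across the rules of $P$ times the recursion depth), but since the fixpoint of a $\DL$ program stabilizes after at most exponentially-many stages over a domain of a given size, and the relevant domain can be taken to be $\adom(Q,\bfI)$ plus a fresh set $C$ whose size is bounded by the maximum rule width, the set of potentially-useful fresh constants is bounded. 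I expect $\card{S}$ to be polynomial, giving a search space $\Facts(S)$ of polynomial size and hence an exponential number of candidate subsets to insert.

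Granting a bounded domain $S$ of polynomial size, the second step is the deterministic search: enumerate all updates $\Rep$ over $S$ in order of increasing cardinality (there are at most $2^{O(\card{\Facts(S)})}$, which is exponential in the input), and for each candidate compute $\EvalL(\repI)$, which for $\DL$ is in $\EXP$ by the cited $\EXP$-completeness of $\DL$ evaluation~\cite{DantsinEGV97,Vardi82,Immerman86}. Returning the first (smallest) repair found yields a minimum-cardinality repair. The total running time is exponential: exponentially-many candidates, each checked in exponential time, and an exponential times an exponential is still exponential, so the procedure is in $\FEXP$. Correctness follows from the bounded-repairs property: if any repair exists, one exists over $S$, and the minimal such one over $S$ is globally minimal by genericity (renaming fresh constants outside $S$ cannot decrease the repair size). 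For the $\spDL$ case, the argument is structurally identical but the evaluation problem $\Eval{\spDL}$ is still in $\EXP$, while the bounded-repair set $S$ may now have to account for deletions as well as insertions (since $\spDL$ is non-monotone); I expect the domain $S$ to again be polynomial, but the subtlety is that the analysis of the bounding set becomes more delicate because negated atoms mean that inserting or deleting an unbounded number of facts could in principle be relevant.

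The main obstacle is the bounded-repairs analysis for $\spDL$, which is also where the extra exponential blowup arises: I expect that for $\spDL$ the clean polynomial bound on $\card{S}$ fails, and one must instead allow a domain $S$ of size exponential in the input. The reason is that a semi-positive program's behavior on an instance depends on the precise set of facts present and absent over the whole active domain, and witnessing that $\tup{a} \in Q(\repI)$ may require controlling the truth values of negated atoms across a domain whose relevant size is governed by the exponential-stage fixpoint computation rather than a single satisfying assignment. With $\card{S}$ exponential, $\Facts(S)$ is exponential, so there are doubly-exponentially-many candidate updates, each evaluated in $\EXP$ (in the now-exponential-sized instance $\repI$, this is double-exponential time), placing the overall procedure in $\FtwoEXP$. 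The crux of the proof is thus to pin down, for each of $\DL$ and $\spDL$, the correct bound on the size of the witnessing domain $S$ and to verify that a minimal repair over $S$ is genuinely globally minimal; once the domain is bounded, the brute-force-plus-evaluation argument is routine.
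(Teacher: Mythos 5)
Your proposal follows essentially the same route as the paper: bound the domain of candidate repairs (polynomial for $\DL$ using monotonicity, exponential for $\spDL$), then brute-force enumerate updates over that domain in order of cardinality, checking each with the exponential-time evaluation procedure. The one ingredient you leave as an expectation --- that an exponential-size domain of fresh constants suffices for $\spDL$ --- is exactly what the paper supplies, by observing that the $\EXP$ satisfiability procedure for $\spDL$ of Levy et al.\ can be modified to produce a satisfying instance of at most exponential size.
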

\begin{proof}
For membership of $\MAmin{\DL}$ in $\FEXP$, recall that any Boolean $\DL$ program $P$ is monotone. Furthermore, it has been observed in earlier work (\cite{Shmueli93}) that $P$ can be satisfied in a database instance containing only the constants in $\adom(P)$. Given an input $\langle P, \bfI, \tup{a} \rangle$, we first form a Boolean $\DL$ program $P'$ by substituting the constants in $\tup{a}$ for the free variables of rules in $P$ with answer predicate $\ans_P$. By monotonicity, it suffices to only make insertions to $\bfI$ in order to repair $P'$. Furthermore, we can make only insertions over the set $\adom(P',\bfI)$ of constants, which is a subset of $\Dom$ which is linear in the size of the input. Consider the algorithm which, for each possible update $\Rep$ over $\Facts(\adom(P',\bfI)) \setminus \Facts(\adom(\bfI))$ (in order of cardinality of $\Rep$, checks if $\Rep$ is a repair with respect to $\langle P, \bfI, \tup{a} \rangle$ by computing $\Eval{\DL}(\repI)$. If so, then $\Rep$ is a minimal repair, and so we return $\Rep$. If no repair is found, then we reject. Since the loop uses linearly-many iterations in the size of the input, and simulating $\Eval{\DL}$ takes exponential time, it is easy to see that this algorithm is in $\FEXP$.

The argument that $\MAmin{\spDL}$ is in $\FtwoEXP$ is similar. In this case, given an input $\langle P, \bfI, \tup{a} \rangle$, we need to consider all repairs, including both insertions and deletions, over $\tup{a} \cup \adom(P,\bfI) \cup C$, where $C$ is set of fresh constants which is exponentially-large in the size of $P$. The sufficiency of this search space follows by the $\EXP$ satisfiability procedure for $\spDL$ given in \cite{levy1993equivalence}, which can be easily modified to allow for the construction of a satisfying database instance $\bfJ$ of at most exponential size. We can assume without loss of generality that $\bfJ$ contains only facts over $\adom(P) \cup C$. As a result, the brute force algorithm must check doubly-exponentially-many possible updates $\Rep$ over $\tup{a} \cup \adom(P,\bfI) \cup C$, where $\repI$ is guaranteed to be at most exponentially-larger than the input. Consequently, the entire brute-force algorithm runs in doubly-exponential time.
\end{proof}

\section{Data Complexity}
\label{sec:data}
In this section, we show that $\MAmin{P}$ is polynomial-time computable for all $\spDL$ programs $P$. This will be a special case of a general result on polynomial-time computability, under data complexity, of minimal repairs for a large class of queries. For this, we first introduce the following necessary definition, which is the key observation leading to the polynomial-time computability of $\MAmin{\spDL}$.

\begin{definition}
\label{def:constant-sized-repairs}
A query $Q$ \emph{admits constant-sized repairs} if there is some $\cbound \in \mathbb{N}$ such that, for any instance $\bfI$ and tuple $\tup{a}$, if a repair with respect to $\langle Q, \bfI, \tup{a} \rangle$ exists, then there exists a repair $\Rep'$ with respect to $\langle Q, \bfI, \tup{a} \rangle$ such that $\card{\Rep'} \leq \cbound$.
\end{definition}

Note that Definition \ref{def:constant-sized-repairs} describes a property of queries, rather than query classes. It also only involves a finite bound on the size of repairs, but does not necessarily imply that a bounded domain of constants for repairs can be computed. However, such a domain can be computed if the map $Q \mapsto c_Q$ is computable and $Q$ is generic (see Definition \ref{def:generic-queries}) -- this will be implicit in the proof of Theorem \ref{thm:general-data-ptime}. However, when studying data complexity, it does not matter whether or not $Q \to c_Q$ is computable, only that $c_Q$ exists for the query $Q$ of interest. The following example sheds some light on the notion of constant-sized repairs.

\begin{example}
Let $Q \in \CQneg$ be defined by the rule $S(x,y,z) \df R(x,y), R(y,z), \lnot R(z,x)$. Given an arbitrary instance $\bfI$ and a tuple $\langle a_1,a_2,a_3 \rangle$, a repair $\Rep$ needs at most two insertions and one deletion in order to ensure that $\langle a_1,a_2,a_3 \rangle \in \repI$ -- namely, we must have that $R(a_1,a_2), R(a_2,a_3) \in \repI$ and $R(a_3,a_1) \not \in \repI$. Hence $Q$ admits constant-sized repairs with $c_Q = 3$. Note that admitting constant-sized repairs does not mean that a repair always exists: for example, if the input tuple is $\langle a_1, a_1, a_1 \rangle$, then there is no repair $\Rep$ such that $\langle a_1,a_1,a_1 \rangle \in Q(\repI)$, since this would require that the fact $R(a_1,a_1)$ is both in and not in $\repI$. 
\end{example}

\begin{lemma}
\label{lemma:cqneg-constant-repairs}
Every query $Q \in \CQneg$ admits constant-sized repairs.
\end{lemma}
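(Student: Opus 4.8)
The plan is to take the constant $c_Q$ from Definition \ref{def:constant-sized-repairs} to be simply the number of literals $n$ in the single rule $S(\tup{x}) \df \alpha_1, \hdots, \alpha_n$ defining $Q$; since $Q$ is fixed, this is a genuine constant independent of $\bfI$ and $\tup{a}$. The guiding observation is that an inequality atom $s \neq t$ constrains only the satisfying assignment and never forces an insertion or deletion, whereas each positive (resp.\ negative) extensional literal forces at most one fact to be present (resp.\ absent). So one repair can always be found that touches at most one fact per literal.

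First I would assume a repair $\Rep$ exists, so that $\tup{a} \in Q(\repI)$. By the definition of query evaluation this yields a satisfying assignment $g$ with $g(\tup{x}) = \tup{a}$ such that $R(g(\tup{t_i})) \in \repI$ for every positive literal $\alpha_i = R(\tup{t_i})$, $R(g(\tup{t_i})) \notin \repI$ for every negative literal $\alpha_i = \lnot R(\tup{t_i})$, and $g(s) \neq g(t)$ for every inequality atom $\alpha_i = (s \neq t)$. I would then define a new repair $\Rep' = (\Ins', \Del')$ \emph{directly against the original instance} $\bfI$, reusing this same $g$: let $\Ins'$ consist of the facts $R(g(\tup{t_i}))$ demanded by positive literals that are not already in $\bfI$, and let $\Del'$ consist of the facts $R(g(\tup{t_i}))$ forbidden by negative literals that currently lie in $\bfI$.

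Next I would verify that $\Rep'$ is a legitimate update and a repair. The conditions $\Ins' \cap \bfI = \emptyset$ and $\Del' \subseteq \bfI$ hold by construction, and $\Ins' \cap \Del' = \emptyset$ because any fact in $\Ins'$ lies outside $\bfI$ while any fact in $\Del'$ lies inside $\bfI$. Re-running $g$ on $\bfI_{\Rep'}$: every positive literal is satisfied, since its required fact is inserted when missing and is never deleted; every negative literal is satisfied, since its forbidden fact is deleted when present and is never inserted; and every inequality is satisfied because it depends only on $g$, not on the instance. Hence $g$ witnesses $\tup{a} \in Q(\bfI_{\Rep'})$, so $\Rep'$ is a repair, and $\card{\Rep'} \leq n = c_Q$, as each of the $n$ literals contributes at most one modified fact (inequalities contributing none).

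The main obstacle — a minor one — is confirming that $\Ins'$ and $\Del'$ cannot collide, i.e.\ that $g$ never demands a single fact both present and absent. This follows from the fact that $g$ already satisfies $Q$ over $\repI$: were some $R(g(\tup{t_i}))$ required both by a positive and a negative literal under $g$, it could not consistently be both in and out of $\repI$, contradicting that $g$ is a satisfying assignment there. Everything else is routine bookkeeping, and since the construction is uniform in $\bfI$ and $\tup{a}$ and never exceeds $n$ modifications, it establishes exactly the constant bound required by Definition \ref{def:constant-sized-repairs}.
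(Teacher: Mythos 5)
Your proof is correct and follows essentially the same approach as the paper: extract a satisfying assignment from an assumed repair, then insert exactly the facts demanded by positive literals and delete exactly those forbidden by negative literals, bounding the repair size by the number of literals in the rule. Your write-up is in fact slightly more explicit than the paper's about verifying that the reconstructed update is well-formed and consistent, but the underlying idea is identical (the paper's bound $n_1+n_2$ just omits the inequality atoms, which contribute nothing, from the count).
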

\begin{proof}
Let $Q \in \CQneg$ be an arbitrary query defined by a rule $r$ of the form
\[
S(\tup{x}) \df \alpha_1,\hdots,a_{n_1},\lnot \beta_1,\hdots,\lnot \beta_{n_2},\gamma_1,\hdots,\gamma_{n_3},
\]
where the $\alpha_i$ are positive literals, the $\lnot \beta_i$ are negative literals, the $\gamma_i$ are either inequality or equality atoms. Let $\bfI$ denote an arbitrary instance and $\tup{a}$ denote an arbitrary tuple over $\Dom$. To ensure that $\tup{a} \in \repI$, we need to find a repair $\Rep$ such that for some assignment $g$ with $g(\tup{x}) = \tup{a}$, we have that $\alpha_i[g] \in \repI$ for each $i \leq n_1$, $\beta_i[g] \not \in \repI$ for each $i \leq n_2$, and the $\gamma_i$ equalities/inequalities are satisfied by $g$. For any assignment $g$ which satisfies the $\gamma_i$, the maximum number of insertions needed is $n_1$, and the maximum number of deletions needed in $n_2$. Therefore, the maximum size of any minimal repair, if one exists at all, is $n_1 + n_2$.
\end{proof}

\begin{theorem}
\label{thm:general-data-ptime}
If $Q$ is an $\adom(Q)$-generic query which admits constant-sized repairs and such that $\EvalQ$ is polynomial-time computable, then $\MAminQ$ is in $\FP$.
\end{theorem}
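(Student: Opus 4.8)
The plan is to turn the three hypotheses into a bounded brute-force search over a polynomially-sized space of candidate repairs. Since $Q$ admits constant-sized repairs, fix the constant $\cbound$ guaranteed by Definition~\ref{def:constant-sized-repairs}, so that whenever a repair with respect to $\langle Q, \bfI, \tup{a} \rangle$ exists, one of size at most $\cbound$ exists. Because we are in the data complexity setting, $Q$ (and hence $\cbound$) is fixed; note that we only need $\cbound$ to \emph{exist}, not to be computable, so it may be hard-wired into the algorithm. As the schema $\bfR$ is also fixed, the maximum arity $a^\ast = \max_{R \in \bfR} \arity(R)$ is a constant as well.

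First I would prove a domain-bounding lemma using genericity: if any repair exists, then a minimal one exists all of whose inserted and deleted facts lie in $\Facts(D)$, where $D = \adom(Q,\bfI) \cup \tup{a} \cup C$ and $C$ is a set of $\cbound \cdot a^\ast$ constants disjoint from $\adom(Q,\bfI) \cup \tup{a}$ (computable in polynomial time by choosing fresh constants). To see this, take a minimal repair $\Rep = (\Ins,\Del)$ with $\card{\Rep} \leq \cbound$. The deletions satisfy $\Del \subseteq \bfI$, so they use only constants of $\adom(\bfI)$; the insertions comprise at most $\cbound$ facts, each of arity at most $a^\ast$, so $\adom(\Ins)$ has at most $\cbound \cdot a^\ast$ constants. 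Hence there is an injection of $\adom(\Ins) \setminus (\adom(Q,\bfI) \cup \tup{a})$ into $C$, which extends to a bijection $\rho: \Dom \to \Dom$ fixing $\adom(Q,\bfI) \cup \tup{a}$ pointwise. Since $\rho$ fixes $\adom(Q)$ pointwise, genericity gives $Q(\rho(\repI)) = \rho(Q(\repI)) \ni \rho(\tup{a}) = \tup{a}$; and since $\rho$ fixes $\adom(\bfI)$ and $\Del \subseteq \bfI$, one checks $\rho(\repI) = \bfI_{\Rep'}$ for $\Rep' = (\rho(\Ins), \Del)$. Because $\rho$ sends the fresh constants of $\Ins$ into $C$, which is disjoint from $\bfI$, the pair $\Rep'$ is again a valid update of the same size, and its facts all lie in $\Facts(D)$.

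Given the lemma, the algorithm is immediate: compute $D$, enumerate every update $\Rep$ over $D$ with $\card{\Rep} \leq \cbound$ in nondecreasing order of size, and for each test whether $\tup{a} \in Q(\repI)$ using the polynomial-time procedure for $\EvalQ$; return the first repair found, which is minimal by the ordering, or report that none exists. For the running time, since $Q$ is fixed, the quantities $\cbound$, $a^\ast$, $\card{C}$, and $\len(\tup{a}) = \arity(Q)$ are all constants, so $\card{D} = O(\card{\adom(\bfI)}) = O(n)$ for input size $n$. The number of facts over $D$ is $\sum_{R \in \bfR} \card{D}^{\arity(R)}$, which is polynomial in $n$, and therefore the number of updates over $D$ of size at most the constant $\cbound$ is also polynomial in $n$. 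Each iteration builds $\repI$ and calls $\EvalQ$, both in polynomial time, so the whole procedure runs in polynomial time and $\MAminQ \in \FP$.

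The main obstacle is the domain-bounding lemma, specifically the careful deployment of genericity. One must design the renaming $\rho$ to simultaneously fix every constant on which correctness depends---$\adom(Q)$ for the genericity hypothesis, and $\adom(\bfI)$ together with $\tup{a}$ so that the renamed repair still targets the same instance and the same missing tuple---while relocating only the genuinely fresh constants of $\Ins$ into the bounded reservoir $C$, and then verifying that the result $\Rep'$ remains a legitimate update (its insertions avoiding $\bfI$) of unchanged cardinality. Once this is in place, the enumeration and the complexity accounting are routine, relying only on $\cbound$ being constant and $\EvalQ$ being polynomial-time.
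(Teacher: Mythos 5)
Your proof is correct and follows essentially the same route as the paper's: a domain-bounding claim via $\adom(Q)$-genericity (relocating the at most $\cbound \cdot a^\ast$ fresh constants of a minimal repair into a fixed reservoir $C$ by a bijection fixing $\adom(Q,\bfI) \cup \tup{a}$ pointwise), followed by brute-force enumeration of the polynomially many updates of size at most $\cbound$ over the bounded domain, each checked with the polynomial-time $\EvalQ$. Your writeup is in fact slightly more explicit than the paper's in verifying that $\rho(\repI) = \bfI_{\Rep'}$ and that $\Rep'$ remains a legitimate update of unchanged cardinality.
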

\begin{proof}
Let $Q$ be as in the statement of the claim, and let $\langle \bfI, \tup{a} \rangle$ be an input to $\MAminQ$. Let $m = \max\{ \arity(R) \mid R \in \bfR \}$ and let $C = \{ c_1,\hdots,c_{m \cdot c_q} \}$ be a set of constants disjoint from $\adom(Q,I) \cup \tup{a}$. We use the following claim.

\begin{claim}
\label{claim:small-space}
If there exists a minimal cardinality repair $\Rep$ such that $\tup{a} \in Q(\repI)$, then there exists a minimal cardinality repair $\Rep'$ over $C \cup \adom(Q,I) \cup \tup{a}$.
\end{claim}
\begin{proof}[Proof of claim.]
Let $D$ denote the set of constants occurring in $\Rep$ which do not occur in $\adom(Q,\bfI) \cup \tup{a}$. Since $\Rep$ is minimal and $Q$ admits constant-sized repairs, we have $\card{D} \leq m \cdot c_Q$. By the $\adom(Q)$-genericity of $Q$, we have that $\rho(Q(\repI)) = Q(\rho(\repI))$ for any bijective map $\rho: \Dom \to \Dom$ with $\Img_\rho[D] \subseteq C$ and which fixes $\adom(Q) \cup \tup{a}$ point-wise. In particular, $\Rep' = \rho(\Rep)$, obtained by replacing each constant in $\Rep$ by its image under $\rho$, is also a minimal cardinality repair over $C \cup \adom(Q,I) \cup \tup{a}$ such that $\tup{a} \in Q(\bfI_{\Rep'})$.
\end{proof}

To find a minimal update $\Rep$ such that $\tup{a} \in Q(I)$, it suffices by Claim \ref{claim:small-space} to check, for all repairs $\Rep$ over $\adom(Q,\bfI) \cup C \cup \tup{a}$  (in order of size) such that $\card{\Rep} \leq c_Q$, whether or not $\tup{a} \in Q(\repI)$ (by a call to $\EvalQ$), outputting the first repair found. Since $Q$ and $c_Q$ are both constants, while $\adom(Q,\bfI) \cup C \cup \tup{a}$ is linear in the size of the input, this procedure checks only $O(n^{c_Q})$ repairs. Furthermore, since $\EvalQ$ is in $\FP$, this algorithm is also in $\FP$.
\end{proof}

We now recall the notion of \emph{unfoldings}~\cite{chaudhuri1997equivalence}. Intuitively, an unfolding of an $\spDL$ program $P$ is a conjunctive query whose head predicate is $\ans_P$, whose body contains only EDB predicates, and which is obtained by repeatedly substituting the IDB atoms of some rule of $P$ with the body of rules in $P$ whose heads contain the relevant IDB predicate.

\begin{definition}[Unfoldings]
Let $P$ be an $\spDL$ program with rules $r_1,\hdots,r_m$ and answer predicate $\ans_P$. We define $P^\infty$ to be the smallest set containing each rule of $P$ with $\ans_P$ in the head and such that, whenever $P^\infty$ contains a rule $r$ of the form
\[
\ans_P(\tup{x}) \df \alpha_1,\hdots,\alpha_{i-1}, \alpha_i,\alpha_{i+1},\hdots,\alpha_n,
\]
where $\rel(\alpha_i) = S \in \bfS$, $\var(\alpha_i) = \tup{y}$, and $r'$ is a rule in $P$ of the form $S(\tup{u}) \df \Body_{r'}(\tup{u}, \tup{v})$ with $\free(r') = \tup{u}$ and $\bound(r') = \tup{v}$, then $P^\infty$ also contains the rule
\[
\ans_P(\tup{x}) \df \alpha_1,\hdots,\alpha_{i-1}, \Body_{r'}(\tup{y}, \tup{z}), \alpha_{i+1},\hdots,\alpha_n
\]
where $\tup{z}$ is a tuple of fresh variables not occurring in $r$ or $r'$. We define $\Unfoldings(P)$ to be the set of all rules $r$ in $P^\infty$ such that the body of $r$ contains only extensional relation symbols.
\end{definition}

The next theorem, originally stated for $\DL$ but generalizable to $\spDL$ programs, enables us to view $\spDL$ programs as (possibly infinite) unions of $\CQneg$ queries.

\begin{theorem}[\cite{chaudhuri1997equivalence}]
If $P \in \spDL$ has answer predicate $\ans(\tup{x})$, then
\[
\ans(\bfI) = \{ \tup{a} \in \Dom^{\len(\tup{x})} \mid \tup{a} \in Q(\bfI) ~\text{for some}~ Q \in \Unfoldings(P) \}.
\]
\end{theorem}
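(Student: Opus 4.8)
The plan is to prove both inclusions by relating the answers of $P$ to finite derivation (proof) trees, and then matching these trees with the unfoldings in $\Unfoldings(P)$. First I would recall that a $\spDL$ program carries negation only on extensional predicates, and that the EDB relations are fixed by the input instance $\bfI$. Consequently the immediate-consequence operator $T_P$ is monotone on the intensional facts: negated EDB literals and inequalities act as fixed side conditions that a rule application must satisfy against $\bfI$, but they never block the addition of new IDB facts once satisfied. Hence $\ans(\bfI)$ is the least fixpoint $\bigcup_{n \geq 0} T_P^n(\bfI)$ restricted to the answer predicate, and a tuple $\tup{a}$ lies in $\ans(\bfI)$ exactly when there is a finite proof tree whose root is labeled $\ans_P(\tup{a})$, each internal node is an instance $\Body_r(\tup{t})$ of some rule $r \in P$ with head matching the node's label, and whose leaves are EDB literals holding in $\bfI$ (positive facts in $\bfI$, negated facts absent from $\bfI$, and true inequalities). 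This proof-tree characterization is the engine for both directions, and it survives the passage from $\DL$ to $\spDL$ precisely because EDB negation preserves monotonicity.

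For the inclusion $\supseteq$, given $Q \in \Unfoldings(P)$ and a satisfying assignment $g$ with $g(\tup{x}) = \tup{a}$, I would reconstruct a proof tree by reversing the substitution steps used to build $Q$ from the seed rules with head $\ans_P$. I would argue by induction on the number of unfolding steps: each step replaced an intensional atom $\alpha_i$ with $\rel(\alpha_i) = S \in \bfS$ by the body $\Body_{r'}(\tup{y},\tup{z})$ of a rule $r'$ defining $S$, using fresh variables $\tup{z}$; reversing this step reattaches a subtree rooted at an application of $r'$. Since $g$ satisfies the whole body of $Q$, its restriction satisfies each reconstructed rule instance, and the EDB leaves hold in $\bfI$. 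Thus $\ans_P(\tup{a})$ is derivable and $\tup{a} \in \ans(\bfI)$.

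For the inclusion $\subseteq$, given $\tup{a} \in \ans(\bfI)$ I would take a finite proof tree for $\ans_P(\tup{a})$ and read off the corresponding unfolding: starting from the seed rule at the root, each internal node that expands an intensional atom records a substitution step of exactly the form in the definition of $P^\infty$, and the resulting rule---once all intensional atoms are expanded down to the EDB leaves---has a body consisting solely of extensional literals, hence lies in $\Unfoldings(P)$. The assignment witnessed by the proof tree, after renaming to the fresh variables introduced during unfolding, is a satisfying assignment showing $\tup{a} \in Q(\bfI)$.

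The main obstacle is the bookkeeping of variables. The definition of $P^\infty$ injects fresh variables $\tup{z}$ at every expansion, so in both directions I must track a consistent renaming between the locally scoped variables of each rule instance in the proof tree and the globally fresh variables of the unfolding, while preserving the equalities induced by shared variables in rule bodies together with the side conditions coming from negated EDB literals and inequalities. Once this correspondence is set up carefully---exploiting that EDB negation imposes only fixed constraints against $\bfI$ and never interferes with monotone derivation---both inclusions follow by a routine induction on proof-tree height (equivalently, on the number of unfolding steps), which is why I regard this as essentially the classical unfolding theorem of \cite{chaudhuri1997equivalence} adapted to carry negated extensional literals.
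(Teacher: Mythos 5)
The paper states this result without proof, citing \cite{chaudhuri1997equivalence} and merely remarking that the $\DL$ version generalizes to $\spDL$, so there is no in-paper argument to compare against. Your sketch is the standard and correct one: the least-fixpoint/proof-tree characterization survives the addition of negated EDB literals and inequalities precisely because these are evaluated once against the fixed instance $\bfI$ and never destroy monotonicity of the immediate-consequence operator on IDB facts, and the two inclusions then follow from the usual correspondence between finite proof trees and fully expanded rules in $P^\infty$. The only points worth pinning down in a full write-up are the ones you already flag (consistent renaming against the fresh variables $\tup{z}$) plus the minor mismatch that a proof tree expands all intensional atoms at a level simultaneously while the definition of $P^\infty$ expands one atom per step, which is resolved by fixing any linearization of the expansions; neither affects correctness.
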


\begin{corollary}
\label{cor:spDL-main}
$\MAmin{P}$ can be computed in polynomial-time for all $P \in \spDL$.
\end{corollary}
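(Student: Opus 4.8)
The plan is to derive Corollary~\ref{cor:spDL-main} from Theorem~\ref{thm:general-data-ptime} by verifying its three hypotheses for the fixed program $P$, viewed as a single query. Two of these are routine. First, $\Eval{P}$ is polynomial-time in data complexity, since evaluating a fixed semi-positive datalog program on an instance is a standard bottom-up fixpoint computation. Second, $P$ is $\adom(P)$-generic: a bijection $\rho$ fixing $\adom(P)$ pointwise preserves membership of facts, (in)equalities between constants outside $\adom(P)$, and the closed-world negation of EDB atoms, so it commutes with the fixpoint semantics and hence with $P(\cdot)$. Thus the entire content reduces to establishing that $P$ admits constant-sized repairs (Definition~\ref{def:constant-sized-repairs}), which is where the unfolding theorem and Lemma~\ref{lemma:cqneg-constant-repairs} enter.

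For constant-sized repairs I would first record a feasibility observation that is independent of $\bfI$: a repair for $\langle P, \bfI, \tup{a}\rangle$ exists if and only if there is an unfolding $Q \in \Unfoldings(P)$ together with an assignment $g$ with $g(\tup{x}) = \tup{a}$ that is \emph{internally consistent}, meaning $g$ satisfies all (in)equalities of $Q$ and no positive EDB atom of $Q$ is mapped by $g$ to the same fact as a negated EDB atom. Indeed, by the unfolding theorem $\tup{a}\in P(\repI)$ holds iff $\tup{a}\in Q(\repI)$ for some such $Q$ and $g$, and given an internally consistent $(Q,g)$ one can always realize it by inserting the missing positive atoms and deleting the present negated atoms; the \emph{existence} of a repair therefore does not depend on $\bfI$, only its \emph{size} does. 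Moreover, the repair associated with a consistent $(Q,g)$ has size at most the number of EDB atoms occurring in $Q$, by the same count $n_1 + n_2$ appearing in the proof of Lemma~\ref{lemma:cqneg-constant-repairs}.

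It therefore suffices to bound, by a constant depending only on $P$, the number of EDB atoms in some internally consistent unfolding, whenever one exists. For this I would restrict attention to \emph{loop-free} unfoldings, that is, those generated by an unfolding derivation in which no intensional relation symbol repeats along any root-to-leaf path. Since $P$ has finitely many rules and finitely many intensional symbols, such derivation trees have height at most the number of intensional relation symbols occurring in $P$, and hence a number of EDB atoms bounded by a constant $c_P$ determined by $P$ alone. The claim is then that if any internally consistent unfolding exists, a loop-free one does as well, obtained by repeatedly splicing out the segment of the derivation lying between two occurrences of the same intensional symbol on a branch.

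The main obstacle is precisely this splicing step: eliminating a loop identifies variables that were previously threaded through the removed segment, which could in principle create a forced equality conflicting with an inequality of $Q$, or cause a positive and a negated atom to coincide. I would handle this by grounding the bound variables with pairwise-distinct fresh constants both before and after splicing --- which can only make inequalities easier to satisfy and makes accidental positive/negated coincidences impossible unless forced through the head variables $\tup{a}$ --- and by choosing an innermost repeated occurrence to splice, so that the excised segment contributes no constraints among the surviving, head-threaded variables. Verifying that internal consistency is preserved under this operation is the crux; once it is, the shortest internally consistent unfolding is loop-free and hence has at most $c_P$ EDB atoms, so $P$ admits constant-sized repairs with bound $c_P$ and the three hypotheses of Theorem~\ref{thm:general-data-ptime} are met, yielding $\MAmin{P} \in \FP$.
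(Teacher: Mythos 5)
Your reduction to Theorem~\ref{thm:general-data-ptime}, the two routine hypotheses, and the ``feasibility observation'' (existence of a repair depends only on the existence of an internally consistent unfolding--assignment pair, and such a pair yields a repair of size at most the number of EDB atoms in the unfolding) are all sound and match the paper's use of Lemma~\ref{lemma:cqneg-constant-repairs}. The gap is exactly the step you flag as the crux: the claim that an internally consistent unfolding can always be replaced by a \emph{loop-free} one by splicing out repeated IDB occurrences along a branch. This is false. Consider the program with rules $\ans \df T(x,x)$, \ $T(x,y) \df U(x,z), T(z,y)$, and $T(x,y) \df V(x,y), x \neq y$. Its only loop-free unfolding is $\ans \df V(x,x), x \neq x$, which is internally inconsistent, whereas the unfolding $\ans \df U(x,z_1), V(z_1,x), z_1 \neq x$ --- whose derivation repeats $T$ along its single branch --- is internally consistent. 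Splicing the two occurrences of $T$ identifies $z_1$ with $x$ and yields precisely the inconsistent unfolding. Your proposed mitigations do not help here: the offending identification is forced through the \emph{arguments} of the repeated IDB atom (which are threaded from the head rule), not through bound variables that could be re-grounded with fresh constants, and there is only one repeated pair, so there is no ``innermost'' choice to exploit. Hence the loop-elimination route, as stated, does not establish constant-sized repairs.

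The paper sidesteps any syntactic bound on unfoldings. It partitions the input tuples $\tup{a} \in \Dom^{k}$ into the finitely many equivalence classes determined by the equality pattern among the coordinates of $\tup{a}$ together with their coincidences with constants of $\adom(P)$; by genericity, whether \emph{some} unfolding can produce $\tup{a}$ over \emph{some} instance depends only on the class of $\tup{a}$. For each class admitting a witness, one fixes a single unfolding $Q_{\mathcal{C}} \in \Unfoldings(P)$ --- non-constructively, with no control on its size --- and applies Lemma~\ref{lemma:cqneg-constant-repairs} to that $\CQneg$ query; the maximum of the resulting constants over the finitely many classes is the bound $c_P$. This is purely existential (the paper explicitly notes that $c_P$ need not be computable from $P$ for a data-complexity statement), and it is the ingredient your argument is missing: you need only \emph{one} internally consistent unfolding per tuple-type, not a \emph{short} one. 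Replacing your splicing step with this finiteness-of-types argument makes the rest of your proof go through.
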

\begin{proof}
Since it is known that evaluation for semi-positive datalog queries is computable in polynomial-time with respect to data complexity \cite{Immerman86}, it suffices by Theorem \ref{thm:general-data-ptime} to show that every $P \in \spDL$ admits constant-size repairs. For this, fix a program $P \in \spDL$, let $\adom(P) = a_1,\hdots,a_n$, and set $k = \arity(\ans)$, where $\ans$ is the answer predicate for $P$. Define an equivalence relation $\equiv$ on tuples in $\Dom^k$, where for tuples $\tup{b} = b_1,\hdots,b_k$ and $\tup{c} = c_1,\hdots,c_k$, we set $\tup{b} \equiv \tup{c}$ if and only if both of the following conditions hold:
\begin{enumerate}
\item $b_i = b_j$ if and only if $c_i = c_j$, and
\item $b_i = a_j$ if and only if $c_i = a_j$.
\end{enumerate}

If $\tup{b} \equiv \tup{c}$, then by $\adom(Q)$-genericity of $P$, we have that $\tup{b} \in \ans(\bfI)$ for some instance $\bfI$ if and only if $\tup{c} \in \ans(\rho(\bfI))$, where $\rho: \Dom \to \Dom$ is the bijective map obtained by setting $\rho(\tup{b}) = \tup{c}$,  $\rho(\tup{c}) = \tup{b}$, and $\rho(e) = e$ for all other $e \in \Dom$. Hence for each equivalence class $\mathcal{C}$ in $\Dom^k \setminus \equiv$, either there exists some $Q_\mathcal{C} \in \Unfoldings(P)$ such that, for all $\tup{b} \in \mathcal{C}$, there exists an instance $\bfI$ such that $\tup{b} \in Q_\mathcal{C}(\bfI)$, or no such $Q_\mathcal{C}$ exists. In the first case, set $N_\mathcal{C} = c_{Q_\mathcal{C}}$; otherwise, set $N_\mathcal{C} = 0$. Since there are finitely-many equivalences classes, we have that $c_P = \max_{\mathcal{C} \in \Dom^k \setminus \equiv} c_{Q_{\mathcal{C}}}$ is finite. Hence $P$ admits constant-sized repairs bounded by $c_P$, which is what we wanted to show.
\end{proof}

Since $\MAmin{P}$ is polynomial-time computable for programs $P$ in $\spDL$, it is natural to ask if the above argument can be extended to the relational algebra ($\RA$) or even to stratified datalog $(\DLneg)$. The following example shows that this is not the case.

\begin{example}
\label{ex:universal-query}
Consider the Boolean $\DLneg$ query, also expressible in $\RA$, given by the rules $R \df P(x)$ and $\ans \df \lnot R$, where $P$ is an EDB predicate and $R,\ans$ are IDB predicates. In the notation of the relational calculus, this is equivalent to the formula $\forall x \lnot P(x)$ (with the universal quantifier interpreted over the active domain). It is easy to see that, for any instance $\bfI$, we have $\ans(\bfI) \neq \emptyset$ if and only if $\bfI$ contains no facts of the form $P(a)$ for any $a \in \Dom$. Hence minimal repairs for this query may be arbitrarily large.
\end{example}

The above example illustrates an important theme: all queries $Q$ in $\spDL$ are \emph{existential}, and can be satisfied by modifying only a small \emph{local} part of the database. On the other hand, the query in Example \ref{ex:universal-query} is \emph{universal} in the sense that it expresses a \emph{global} condition on the active domain of the database. Consequently, minimal repairs for this query can be arbitrarily large, depending on the number of $P$ facts in the input database. While ``existential'' and ``local'' rules yield tractable complexity for the missing answer problems, the opposite is true for the wrong answer problems, by the duality discussed in the introduction.

Corollary \ref{cor:MAdec} provides a pessimistic picture of the combined complexity of determining whether or not repairs exist, particularly for $\RA$ and its extensions. However, in data complexity setting, the $\MAminQ$ problem is always decidable for $\RA$ queries. To see this, observe that, given a fixed query $Q \in \RA$, one of the following algorithms is correct:
\begin{enumerate}
\item if $Q$ is satisfiable: the algorithm which enumerates all possible repairs from smallest to largest until finding a repair; and
\item if $Q$ is unsatisfiable: the algorithm that returns ``no repair'' on all inputs.
\end{enumerate}
Of course, it is undecidable which of the above algorithms is correct, and even for $\RA$ queries which we know to be satisfiable, the running time of such a brute-force search is severely intractable. However, it seems that this is the best that we can say in general for $\RA$ queries, since there is no computable bound in general on the size of repairs required for such queries. If such a bound did exist, then $\dbSAT{\RA}$ would be solvable by checking all repairs within the computable bound for the input $\langle Q, \bfE, \epsilon \rangle$ -- which is impossible.

\section{Concluding Remarks and Future Work}
\label{sec:conc}
Our primary interest in this paper was to identify the dividing line between tractability and intractability of finding missing answer repairs. In combined complexity, we answer this question for various important sub-classes of $\UCQneg$ queries. We also provide $\EXP$-hardness of $\DL$ and $\spDL$, with a matching $\FEXP$ (resp. $\FtwoEXP)$) membership result for $\MAmin{\DL}$ (resp. $\MAmin{\spDL}$). In data complexity, we show that computing missing answer repairs is tractable for the expressive class of $\spDL$ queries. We also discuss the principal difficulty in extending tractability to queries with stronger forms of negation.

One avenue of future work is to close the gaps in Table \ref{table:combined}, such as those for $\MAsize{\spDL}$, $\MAmin{\spDL}$, and $\MAbound{\RA}$. As mentioned in the introduction, the missing answer repair also provides an elegant generalization of the problem of repairing integrity constraints in databases, and so it would be valuable to extend the work in this paper to query classes which can express universal quantification in limited ways, in order to build a more general theory of consistency repairs. Furthermore, while we chose to emphasize minimizing cardinality repairs in this paper, it is certainly worthwhile to investigate other desired constraints upon the repairs computed, such as minimizing side effects on the answer to the query. Another avenue is to generalize the problem to allow an arbitrary number of tuples to appear in the input, all of which must be added to the answer to a query by the repair.

Another important avenue of future work is to consider the generalization of the missing answer repair problem to semiring semantics~\cite{GreenKT07,GraedelTan24}. In this generalization, facts in databases are not simply true or false, but are annotated with values in a fixed semiring. The result of evaluating queries propagates these annotations through simple semiring operations, yielding an output annotation for each output. This present paper can be seen as studying the special case of \emph{set semantics}, which are induced naturally by the \emph{Boolean semiring}.

\bibliographystyle{plainurl}
\bibliography{admin/bib.bib}

\appendix
\section{Appendix}
\label{sec:app}
Recall that we have defined $\DL$ as programs including only positive literals. In this section, we show $\EXP$-hardness of $\dbSAT{\DL}$ by a polynomial-time reduction from the problem of the Datalog non-emptiness problem for the two-element linear order. To the author's knowledge, this lower bound for $\dbSAT{\DL}$ is an unpublished folklore result.

\begin{definition}[The Datalog non-emptiness problem for database instance $\bfI$]
Let $\bfI$ denote a fixed database instance. The \emph{Datalog non-emptiness problem} over $\bfI$ is to determine, given a $\DL$ program $P$ (over the EDB schema of $\bfI$), whether or not $P(\bfI)$ is nonempty.
\end{definition}

\begin{proposition}[\cite{Grohe2009complexity}]
The datalog non-emptiness problem over $\bfI_2 = \{ R(a,b) \}$ (i.e., the database representing a $2$-element linear order) is $\EXP$-hard.
\end{proposition}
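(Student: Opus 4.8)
The plan is to reduce from the combined-complexity evaluation problem $\Eval{\DL}$, which is $\EXP$-complete by~\cite{DantsinEGV97,Vardi82,Immerman86} (and is exactly the evaluation problem discussed in the preliminaries). Given an input $\langle Q, \bfI, \tup{a} \rangle$ to $\Eval{\DL}$, where $Q$ is a $\DL$ program over some EDB schema and $\bfI$ is an arbitrary finite instance, I would construct in polynomial time a single $\DL$ program $\hat{P}$ over the EDB schema $\{R\}$ of $\bfI_2 = \{R(a,b)\}$ such that $\hat{P}(\bfI_2)$ is nonempty if and only if $\tup{a} \in Q(\bfI)$. The central idea is to encode the arbitrary finite domain of $\bfI$ in \emph{binary} over the two-element order: if $\bfI$ has $N$ domain elements, fix $m = \lceil \log_2 N \rceil$ and identify each element with a distinct code in $\{a,b\}^m$, reading $a$ as the bit $0$ and $b$ as the bit $1$. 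Relations of $\bfI$ of arity $k$ then become ``hatted'' IDB predicates of arity $k \cdot m$ whose tuples concatenate the codes of their components.

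First I would observe that the two constants are $\DL$-definable from $R$ alone: since $R = \{(a,b)\}$, the rules $\mathsf{IsA}(x) \df R(x,y)$ and $\mathsf{IsB}(y) \df R(x,y)$ isolate $a$ and $b$, respectively, and both are safe. These unary predicates let any rule body ``pin down'' an $m$-bit code by constraining each coordinate of a block of $m$ variables to be $a$ or $b$. Using them, I translate each ground fact $E(d_{i_1},\dots,d_{i_k})$ of $\bfI$ into a \emph{seed rule} that derives $\hat{E}$ on precisely the concatenation of the codes of $d_{i_1},\dots,d_{i_k}$, using $m \cdot k$ atoms of the form $\mathsf{IsA}$ or $\mathsf{IsB}$ to fix each bit. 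I then rewrite each rule of $Q$ by replacing every variable with a fresh block of $m$ variables and every relation symbol by its hatted, arity-multiplied version; a variable shared between two atoms of $Q$ becomes a block shared between the corresponding hatted atoms, so that joins in $Q$ are simulated coordinate-wise by code equality. Finally I add a Boolean answer predicate with the single rule $\ans \df \hat{\ans}_Q(\overline{z})$ together with $\mathsf{IsA}/\mathsf{IsB}$ atoms pinning $\overline{z}$ to the fixed code of $\tup{a}$.

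Correctness follows because the least fixed point of $\hat{P}$ over $\bfI_2$ mirrors, code-for-code, the least fixed point of $Q$ over $\bfI$: by induction on the stages of the immediate-consequence operator, a hatted fact is derivable in $\hat{P}(\bfI_2)$ exactly when the corresponding original fact is derivable in $Q(\bfI)$, since seed rules reproduce exactly the EDB facts of $\bfI$ and the rewritten rules reproduce exactly the firings of $Q$'s rules. The reduction is polynomial time: there are $|\bfI|$ seed rules and $|Q|$ rewritten rules, each of size $O(m \cdot (\text{original size}))$ with $m = O(\log |\bfI|)$, and the hatted schema is derived rather than assumed. Hence $\hat{P}(\bfI_2)$ is nonempty iff $\tup{a} \in Q(\bfI)$, which transfers $\EXP$-hardness from $\Eval{\DL}$ to the datalog non-emptiness problem over $\bfI_2$.

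The main obstacle I anticipate is verifying that the binary encoding interacts faithfully with the rewriting: I must ensure that distinct domain elements always receive distinct codes (so that equality joins in $Q$ correspond exactly to code equality), that the $2^m - N$ unused codes are simply never seeded and therefore trigger no spurious derivations, that every rewritten rule remains safe because each block variable appears in a hatted body atom, and that hatted IDB predicates used in rule bodies are defined in some head (the seed rules and $\mathsf{IsA}/\mathsf{IsB}$ rules handle this). Once the encoding is fixed, the stage-by-stage correspondence of the two fixed points and the polynomial size bound are routine, so the encoding itself is where the care lies.
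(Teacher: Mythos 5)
The paper does not actually prove this proposition: it is imported as a black-box citation to \cite{Grohe2009complexity} and then used as the starting point for the reduction to $\dbSAT{\DL}$ in the appendix. So there is no in-paper proof to match yours against; the standard proof in the literature simulates an (alternating polynomial-space or deterministic exponential-time) Turing machine directly with a Datalog program over the two-element structure. Your argument takes a different but legitimate route: you piggyback on the $\EXP$-completeness of $\Eval{\DL}$ (which the paper also cites) and transfer hardness to the fixed database $\bfI_2$ by coding the active domain in binary over $\{a,b\}$. The core of the argument is sound: $\mathsf{IsA}$ and $\mathsf{IsB}$ are definable and singleton-valued over $\bfI_2$, the seed rules derive exactly the coded EDB facts, safety guarantees that every head block is constrained by a hatted body atom, and an easy induction shows every derivable hatted fact is a concatenation of valid codes, so the two least fixed points correspond stage by stage and the blow-up is only a $\log$ factor. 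What each approach buys: the direct machine simulation is self-contained, whereas yours is shorter but inherits its hardness from a result whose own proof is of comparable difficulty. A few loose ends you should tidy up: (i) the encoded domain must be $\adom(\bfI)\cup\adom(Q)\cup\tup{a}$, not just the domain of $\bfI$, since $Q$ and $\tup{a}$ may mention constants outside $\bfI$ (and the rewriting must translate constants and equality atoms of $Q$ into pinned blocks and coordinate-wise equalities, which you do not spell out); (ii) an EDB predicate of $\bfI$'s schema that is empty in $\bfI$ but occurs in a body of $Q$ still needs a (vacuous, e.g.\ using both $\mathsf{IsA}(x)$ and $\mathsf{IsB}(x)$ on the same variable) seed rule so that every IDB occurring in a body occurs in some head; (iii) the degenerate case $N\le 1$ needs $m=\max(1,\lceil\log_2 N\rceil)$. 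None of these affects the substance.
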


\begin{theorem}
$\dbSAT{\DL}$ is $\EXP$-hard.
\end{theorem}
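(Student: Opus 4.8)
The plan is to reduce the Datalog non-emptiness problem over $\bfI_2 = \{R(a,b)\}$, shown $\EXP$-hard in the preceding proposition, to $\dbSAT{\DL}$; since the reduction will run in polynomial (in fact linear) time, this establishes the $\EXP$-hardness of $\dbSAT{\DL}$. The conceptual tension to resolve is that $\dbSAT{\DL}$ quantifies \emph{existentially} over all instances, whereas non-emptiness fixes the single instance $\bfI_2$. I would overcome this by building, from a $\DL$ program $P$, a program $Q$ whose answer cannot depend on anything beyond a hard-coded copy of $\bfI_2$, so that the existential choice of instance becomes irrelevant.

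Concretely, given $P$ over the EDB schema $\{R\}$, I would first introduce a fresh binary IDB predicate $R'$ and replace every (extensional) occurrence of $R$ in the bodies of $P$'s rules by $R'$, obtaining a program $P'$ with the same answer predicate $\ans$. Since $R$ is extensional it occurs only in rule bodies, so this rewriting is well-defined and $R'$ occurs only in the bodies of $P'$. Then I would add the single rule $R'(a,b) \df R(u,v)$, reusing the constants $a,b$ that name the two elements of $\bfI_2$ (constants in the head are admissible via the simulation noted in Section~\ref{sec:prelim}). Let $Q$ be the resulting program, with answer predicate $\ans$.

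The core of the argument is the invariant that, on \emph{every} instance $\bfI$, the computed value of $R'$ is $\{(a,b)\}$ when $R^{\bfI} \neq \emptyset$ and $\emptyset$ otherwise: the unique rule with $R'$ in its head emits exactly the tuple $(a,b)$ and fires precisely when the input relation $R$ is nonempty, and no rule of $P'$ can contribute to $R'$ because $R'$ is fresh and appears only in bodies. Hence the computed value of $R'$ is always contained in $\{(a,b)\}$, i.e. in the $R$-relation of $\bfI_2$. By monotonicity of $\DL$, the answer of $Q$ on $\bfI$ equals the answer of $P$ with its EDB $R$ interpreted by this computed value, and is therefore contained in $P(\bfI_2)$; so if $Q$ is satisfiable then $P(\bfI_2) \neq \emptyset$. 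Conversely, if $P(\bfI_2) \neq \emptyset$, then evaluating $Q$ on $\bfI_2$ itself fires the added rule, yields $R' = \{(a,b)\}$, and gives $Q(\bfI_2) = P(\bfI_2) \neq \emptyset$, so $Q \in \dbSAT{\DL}$. Together these directions show $Q \in \dbSAT{\DL}$ iff $P(\bfI_2) \neq \emptyset$.

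I expect the main obstacle to be exactly this bridging step: guaranteeing that, no matter which instance is chosen to witness satisfiability, $Q$ can never simulate $P$ on anything larger than $\bfI_2$. Isolating the simulated relation in the fresh IDB $R'$, whose sole producing rule emits only $(a,b)$, is what makes the uniform containment hold, and combining it with monotonicity is the crux of correctness. A secondary point to verify is that the construction stays within plain $\DL$ (no negation is introduced) and that reusing the actual constants $a,b$ of $\bfI_2$ avoids any appeal to genericity; should one prefer a constant-free $P$, the same reduction works with any two distinct constants by $\adom(Q)$-genericity of $\DL$.
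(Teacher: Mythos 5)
Your reduction is correct and starts from the same source problem as the paper's proof (Datalog non-emptiness over the two-element instance $\bfI_2$), but the construction and the correctness argument are genuinely leaner. The paper re-expresses $P$ over a single EDB symbol $U$ of the fixed schema, introducing an IDB $S$ to simulate $R$, an IDB $V$ to confine every variable to $\{a,b\}$, and guard atoms $U(a,\ldots,a), U(b,\ldots,b)$ in the answer rule; correctness is then argued through the unfolding machinery. You instead isolate the simulated relation in one fresh IDB $R'$ whose unique defining rule can only ever emit $(a,b)$, and you get both directions from two standard facts: modularity (the value of $R'$ can be computed first, since its rule body mentions no IDB, after which the remaining rules are exactly $P$ evaluated on $R'$'s computed value) and monotonicity ($P(\bfJ) \subseteq P(\bfI_2)$ for every $\bfJ \subseteq \bfI_2$). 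This avoids the domain-restriction predicate $V$ and the unfolding argument entirely. One detail you should patch to match the paper's setting: the EDB schema $\bfR$ is fixed and need not contain a binary symbol $R$, so the trigger atom $R(u,v)$ in the body of your added rule should be replaced by an atom $U(u,\ldots,u)$ over whichever relation symbol $\bfR$ actually provides (its only role is to fire the rule on a nonempty instance, and after the renaming $R \mapsto R'$ the remaining rules mention no EDB at all). With that substitution, and writing the head constants via body equalities as you note, your argument goes through verbatim.
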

\begin{proof}
We may assume without loss of generality that the set $\Sigma$ of EDBs contains at least one relation symbol (otherwise, no safe programs can be written), and that the binary relation symbol $R$ does not occur in $\Sigma$. Given a rule $r$ over the schema $\{ R \}$, let $\pi(r)$ be the rule obtained by the following transformations:
\begin{enumerate}
\item for each equality atom $s = t$ occurring in $r$, we place $s = t$ in $\pi(r)$,
\item for each atom $R(s,t)$ in $r$, we place an atom $S(s,t)$ in $\pi(r)$, and
\item for each variable $z$ occurring in $r$, we place an atom $V(z)$ in $\pi(r)$.
\end{enumerate}
Let $P$ denote an arbitrary datalog program over the schema $\{ R \}$ with rules $r_1,\hdots,r_m$ and answer predicate $\ans_P(\tup{x})$. We define a program $\widehat{P}$ over the schema $\{ U \}$ with answer predicate $\ans_{\widehat{P}}$ which has rules $\pi(r_i)$ for $i \leq m$, as well as the rules
\begin{align}
\label{eq:dbsatDL-rules}
V(x) &\df x = a, U(x,\hdots,x), \\
V(x) &\df x = b, U(x,\hdots,x), \\
S(x,y) &\df x = a, y = b, U(x,\hdots,x), U(y,\hdots,y), \\
\ans_{\widehat{P}} &\df U(a,\hdots,a), U(b,\hdots,b), \ans_P(\tup{x}),
\end{align}
respectively, where $V$, $S$, and $\ans_{\widehat{P}}$ are IDBs not occurring in $P$. Observe that, for any database instance $\bfI$, if $\widehat{P}(\bfI) \neq \emptyset$, then $U(a,\hdots,a), U(b,\hdots,b) \in \bfI$, which further implies that $S(\bfI) = \{ (a,b) \}$ and $V(\bfI) = \{ a, b \}$. Clearly the map $P \mapsto \widehat{P}$ is computable in polynomial time, so it remains only to show that $P(\bfI_2) \neq \emptyset$ if and only if $\widehat{P}$ is satisfiable.

For the ``only if'' direction, suppose $P(\bfI_2) \neq \emptyset$. Then there exists some conjunctive query $Q_1 \in \Unfoldings(P)$ such that $Q_1(\bfI_2)$ is non-empty. Let $g$ be a satisfying assignment for $Q_1$ in $\bfI_2$. By the safety condition, we may assume that $\rng(g) \subseteq \{ a,b \}$. Observe that $\pi(Q_1)$ is in $\widehat{P}^\infty$, and note that $\pi(Q_1)$ does not contain any atoms of the form $V(t)$ where $t$ is a constant. Let $Q'_1$ be the CQ, defined by a rule $r'$, obtained from $\pi(Q_1)$ by replacing each atom $S(s,t)$ with the atoms $s = a, t = b, U(s,\hdots,s), U(t,\hdots,t)$, and replacing each atom $V(z)$ in $\pi(Q_1)$ with the facts $z = g(z), U(z,\hdots,z)$. Clearly $Q'_1 \in \widehat{P}^\infty$ and contains only EDB predicates. Finally, let $Q''_1$ be the CQ defined by the rule $r''$ given by
\[
\ans_{\widehat{P}} \df U(a,\hdots,a), U(b,\hdots,b), \Body_{r'}.
\]
Clearly $Q''_1 \in \Unfoldings(\widehat{P})$, and $g$ is a satisfying assignment for $Q''_1$ in $\bfI$. Hence $\widehat{P}(\bfI) \neq \emptyset$, and so $\widehat{P}$ is satisfiable.

For the ``if'' direction, suppose that $\widehat{P}$ is satisfiable. Then there exists some $\bfI$ such that $\widehat{P}(\bfI) \neq \emptyset$. Hence there exists some $Q_2 \in \Unfoldings(\widehat{P})$ such that $Q_2(\bfI)$ is non-empty. Let $g$ be a satisfying assignment for $Q_2$ in $\bfI$. By the construction of $\widehat{P}$, there must exist a CQ $Q'_2$ in $\widehat{P}^\infty$ containing $V$ and $S$ predicates such that $Q'_2$ is obtained by replacing these predicates with appropriate substitutions for the bodies of the rules (2)-(4). This implies that there exists some $Q \in \Unfoldings(P)$ such that $\pi(Q) = Q'_2$, and $g$ is a satisfying assignment for $Q$ in $\bfI_2$. Hence $P(\bfI_2)$ is nonempty.
\end{proof}

\end{document}